\let\newfloat\newfloat@ltx
\newtheorem{theorem}{Theorem}
\newtheorem{lemma}{Lemma}
\newtheorem{corollary}{Corollary}
\newtheorem*{theorem*}{Theorem}
\newtheorem*{corollary*}{Corollary}
\newcommand\nonpfrate[1]{\gamma_{X, Y}}
\newcommand*{\rom}[1]{\expandafter\@slowromancap\romannumeral #1@}
\let\oldproof\proof
\renewcommand{\proof}{\oldproof}
\def\algbackskip{\hskip-\ALG@thistlm}
\begin{document}

\preprint{APS/123-QED}

\title{The Near-optimal Performance of Quantum Error Correction Codes}

\author{Guo Zheng} 
\thanks{These authors contributed equally.}
\email{guozheng@uchicago.edu}
% \email{guozheng@uchicago.edu}
\affiliation{Pritzker School of Molecular Engineering, The University of Chicago, Chicago 60637, USA}

\author{Wenhao He}
\thanks{These authors contributed equally.}
\affiliation{School of Physics, Peking University, Beijing 100871, China}
\affiliation{Center for Computational Science and Engineering, Massachusetts Institute of Technology, Cambridge, MA 02139, USA}

\author{Gideon Lee}
\affiliation{Pritzker School of Molecular Engineering, The University of Chicago, Chicago 60637, USA}

\author{Liang Jiang}
\email{liang.jiang@uchicago.edu}
\affiliation{Pritzker School of Molecular Engineering, The University of Chicago, Chicago 60637, USA}

%\affiliation{AWS Center for Quantum Computing, Pasadena, CA 91125, USA}

% \date{\today}% It is always \today, today,
             %  but any date may be explicitly specified

\begin{abstract}

The Knill-Laflamme (KL) conditions distinguish exact quantum error correction codes, and it has played a critical role in the discovery of state-of-the-art codes. However, the family of exact codes is a very restrictive one and does not necessarily contain the best-performing codes. Therefore, it is desirable to develop a generalized and quantitative performance metric. In this Letter, we derive the near-optimal channel fidelity, a concise and optimization-free metric for arbitrary codes and noise. The metric provides a narrow two-sided bound to the optimal code performance, and it can be evaluated with exactly the same input required by the KL conditions. We demonstrate the numerical advantage of the near-optimal channel fidelity through multiple qubit code and oscillator code examples. Compared to conventional optimization-based approaches, the reduced computational cost enables us to simulate systems with previously inaccessible sizes, such as oscillators encoding hundreds of average excitations. Moreover, we analytically derive the near-optimal performance for the thermodynamic code and the Gottesman-Kitaev-Preskill (GKP) code. In particular, the GKP code's performance under excitation loss improves monotonically with its energy and converges to an asymptotic limit at infinite energy, which is distinct from other oscillator codes.

\end{abstract}

%\keywords{Suggested keywords}%Use showkeys class option if keyword
                              %display desired
\maketitle

%\tableofcontents

% \section{Introduction}

% Oftentimes, it is reasonable to assume some knowledge about the noise processes occurring in the devices. Therefore, it is advantageous to design QEC codes tailored to such noise processes. 

\textit{Introduction.}---Quantum error correction (QEC) has central importance in scaling up quantum devices. The seminal work~\cite{PhysRevA.55.900} by Knill and Laflamme (KL) outlines the necessary and sufficient conditions for exact QEC. The KL conditions are celebrated for their conciseness and computational efficiency. Practically, they have guided the discoveries and analysis of many state-of-the-art qubit codes~\cite{bravyi1998quantum, PhysRevA.52.R2493, PhysRevA.56.2567} and oscillator codes~\cite{PhysRevA.64.012310, PhysRevLett.111.120501, NatureLescanne2020, PhysRevX.6.031006, XuNpjqi2023}. 

The KL conditions deal with \emph{exact} error correction, in that they tell us whether or not any given error is exactly correctable by a given code. However, there are two issues with this: first, the set of correctable errors may not correspond exactly to the errors that occur in real devices -- one often needs to approximate practical noise sources by considering a truncated set of Kraus operators and/or through techniques such as Pauli Twirling~\cite{PhysRevX.11.041039, PhysRevA.94.052325}. Furthermore, codes that \emph{exact} correct leading order errors do not necessarily outperform codes that ``approximately" correct errors at all orders~\cite{PhysRevA.56.2567, PhysRevA.97.032346}. Therefore, it is critical to develop a concise and efficiently computable metric (similar to the KL conditions) that quantifies the capabilities of general codes. Such an extension uncovers the fundamental limit set by the encoding and noise, which is a critical benchmark for code designs.

One such widely accepted benchmark is the channel (or process) fidelity~\cite{PhysRevLett.94.080501, PhysRevA.75.012338, Denys2023tqutrittwomode, PhysRevA.54.2614}\footnote{The channel fidelity is equivalent to the entanglement fidelity for maximally entangled states}. Under this metric, the optimal recovery can be found through convex optimizations, which motivated works in optimization-based QEC~\cite{PhysRevLett.94.080501, PhysRevA.65.030302, Kosut_2009, 8482307, PhysRevA.97.032346, PhysRevA.75.012338, PhysRevA.106.022431, fletcher2007channeladapted, reimpell2006comment, Yamamoto_2005, PhysRevLett.104.120501}. The optimal recovery fidelity can also serve as a guide for encoding designs~\cite{XuZheng2023, Denys2023tqutrittwomode, PhysRevA.97.032346, PhysRevA.106.022431}. However, these methods have two drawbacks. Firstly, although convex optimization algorithms are decently optimized, they remain computationally expensive compared to optimization-free methods. Consequently, past numerical works only work with small Hilbert space sizes, such as systems with fewer than five qubits~\cite{PhysRevLett.100.020502, PhysRevLett.94.080501, PhysRevA.101.042307} or oscillators containing at most ten average excitations~\cite{8482307, PhysRevA.97.032346,Leviant2022quantumcapacity}. Secondly, these optimization techniques are inherently numerical. While powerful, they do not yield analytical forms for many of the quantities we are interested in, eg. the parametric dependence of the code performance. 

Near-optimal recoveries~\cite{10.1063/1.1459754, PhysRevA.81.062342, Petz:1988usv, wilde2017quantum, 10.1063/1.3463451, PhysRevLett.107.080501, PhysRevLett.104.120501} have the potential to circumvent these limitations. These channels have constructive forms and solve a relaxed optimization problem: their performances provide two-sided bounds on the optimal fidelity. These channels have led to attempts to generalize the KL conditions~\cite{PhysRevLett.104.120501, B_ny_2011, PhysRevA.81.062342, PhysRevA.86.012335, Junge_2018}, further leading to the development of codes like the thermodynamic code~\cite{PhysRevLett.123.110502, PhysRevX.10.041018, Zhou2021newperspectives}. However, the generalized conditions still involve optimizations and/or the Bures metric, which are generally challenging to analyze. The common solutions were to derive bounds on the scaling of the near-optimal fidelity assuming large system sizes, which no longer provides a two-sided bound on the optimal performance. Moreover, the parametric dependence on other system parameters remains unknown.

In this Letter, we derive a concise and optimization-free performance metric, the near-optimal channel fidelity. The metric is achievable by the transpose channel~\cite{10.1063/1.1459754}, also known as the Petz channel~\cite{wilde2017quantum, Petz:1988usv, PhysRevLett.128.220502}, and provides a narrow two-sided bound on the optimal fidelity. Crucially, the only input to our expression is the QEC matrix, which is exactly what is required to verify the KL conditions. Therefore, our result is a quantitative generalization of the KL conditions for arbitrary codes and noise processes. We also develop a perturbative expression, providing intuition on how are codes' performances connected to the structures of their error subspaces. More importantly, the perturbative form well approximates the near-optimal fidelity and can be computed analytically. Taking multiple qubit codes as examples, we numerically validate the proximity of the closed-form near-optimal expression and the optimal fidelity obtained from convex optimizations. Furthermore, we analytically compute the near-optimal fidelity for the thermodynamic code in the thermodynamic limit, for which only a scaling with system size was known in past works. After examining a few representative oscillator codes, we provide rigorous insights on why certain codes' performances under excitation loss improve monotonically with increased energy, such as the Gottesman-Kitaev-Preskill (GKP) code~\cite{PhysRevA.64.012310}. While past numerical simulations of the GKP code were limited to a few average excitations, we extend our result to hundreds of excitations. We also obtain GKP's performance analytically, parameterized by system parameters and loss rates. With the analytical expression, we find the GKP's performance admits an asymptotic limit at infinite energy.

% \section{Background}

\textit{Background.}---The Knill-Laflamme (KL) conditions~\cite{PhysRevA.55.900} are the necessary and sufficient conditions for exact QEC codes. For completeness, we briefly review the conditions here. In QEC settings, the logical information is encoded through a code with $d_L$ logical codewords $\set{\ket{\mu_L}}$, which subsequently passes through a noise channel $\mathcal{N}$ with Kraus form $\set{\hat{N}_i}$. The QEC matrix is defined as 
\begin{eqnarray}
    M_{[\mu  l], [\nu k]} = \bra{\mu_L} \hat{N}_l^\dagger \hat{N}_k\ket{\nu_L} \label{eq:QEC_mat}
\end{eqnarray}
in index notation. The KL conditions states that a code is a exact error-correcting code if and only if the QEC matrix can be written as $M = I_L \otimes A$, with $I_L$ denoting a \emph{logical} $d_L$-dimensional identity matrix. 

The KL conditions assess codes with the assumption that any physical recovery is allowed. Such an idea can be extended to general codes: the performance of an encoding, $\mathcal{E}$, against certain noise, $\mathcal{N}$, is determined by the performance of the optimal recovery, $\mathcal{R}^{\text{opt}}$. To define optimality, in this work, we adopt the metric of channel fidelity. For any quantum channel $\mathcal{Q}$, the channel fidelity is defined as~\cite{PhysRevA.75.012338, PhysRevA.54.2614}
\begin{eqnarray}
    F\left(\mathcal{Q}\right) := \bra{\Phi} \mathcal{Q} \otimes\mathcal{I}_R\left(\ket{\Phi}\bra{\Phi}\right)\ket{\Phi},
\end{eqnarray}
where $\ket{\Phi}$ is the purified maximally mixed state, and $\mathcal{I}_R$ is the identity channel acting on the \emph{reference} ancillary system. The optimal recovery, $\mathcal{R}^{\text{opt}}$, is defined as a recovery that achieves the optimal fidelity
\begin{align}
        F^{\text{opt}} := \max_{\mathcal{R}} F\left(\mathcal{R}\circ\mathcal{N}\circ\mathcal{E}\right) = F\left(\mathcal{R}^{\text{opt}}\circ\mathcal{N}\circ\mathcal{E}\right),
\end{align}
where $\circ$ indicates channel compositions. For discussions below, we refer to the channel fidelity as fidelity for simplicity. Our choice of metric is well-motivated by two important properties of the channel fidelity. Firstly, the channel fidelity is directly connected to other widely adopted metrics such as the average input-output fidelity~\cite{PhysRevA.60.1888, Nielsen_2002}. Secondly, the metric is linear in the Choi matrix of the recovery, which causes the optimization to fall in the category of semidefinite programming (SDP)~\cite{SM}, a subfield of convex optimization.

\textit{Main result.}---Here we propose the near-optimal fidelity as a quantitative metric that can be evaluated without optimization.
\begin{theorem}[The near-optimal fidelity]\label{theo:Fe}
    For a $d_L$-dimensional encoding, $\mathcal{E}$, and a noise channel, $\mathcal{N}$, the near-optimal fidelity is
    \begin{align}
        \tilde{F}^{\text{opt}}=\frac{1}{d_L^2}\left\|\operatorname{Tr}_L \sqrt{M}\right\|_F^2,\label{eq:near_opt_F}
    \end{align}
    where $M$ is the QEC matrix, $\left(\operatorname{Tr}_L B\right)_{l,k} = \sum_\mu B_{[\mu l], [\mu k]}$ denotes the partial trace over the code space indices, and $||\cdot||_F$ is the Frobenius norm. The near-optimal fidelity gives a two-sided bound on the optimal fidelity as
    \begin{eqnarray}
    \frac{1}{2}\left(1 - \tilde{F}^{\text{opt}} \right)\leq 1 - F^{\text{opt}} \leq 1 - \tilde{F}^{\text{opt}}.\label{eq:two_sided_bound}
    \end{eqnarray}
\end{theorem}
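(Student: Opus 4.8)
\emph{Proof proposal.} The plan is to exhibit one explicit, optimization-free recovery---the transpose (Petz) channel $\mathcal{R}^{P}$ of $\mathcal{N}\circ\mathcal{E}$ with respect to the maximally mixed logical state---to show that its channel fidelity equals exactly the right-hand side of \eqref{eq:near_opt_F}, and then to read off the two bounds of \eqref{eq:two_sided_bound}: the upper one because any concrete recovery lower-bounds $F^{\mathrm{opt}}$, and the lower one from the known near-optimality of the transpose channel. Fix an encoding isometry $W:\mathbb{C}^{d_L}\to\mathcal{H}$ with $W\ket{\mu}=\ket{\mu_L}$, so that $\mathcal{E}(\cdot)=W(\cdot)W^\dagger$, $P:=WW^\dagger$ is the code projector, and $\Sigma:=\mathcal{N}(P)=\sum_k\hat{N}_k P\hat{N}_k^\dagger$. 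Then $\mathcal{R}^{P}(X)=\sum_l W^\dagger\hat{N}_l^\dagger\,\Sigma^{-1/2}\,X\,\Sigma^{-1/2}\hat{N}_l W$, with $\Sigma^{-1/2}$ the generalized inverse square root; since $\sum_l \Sigma^{-1/2}\hat{N}_l P\hat{N}_l^\dagger\Sigma^{-1/2}=\Pi$, the projector onto the range of $\Sigma$, the map $\mathcal{R}^{P}$ is trace non-increasing and extends to a genuine channel on the orthogonal complement of that range in a way that contributes nothing to the composition, because each $\hat{N}_k W$ maps into the range of $\Sigma$.

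For the formula I would evaluate $F(\mathcal{R}^{P}\circ\mathcal{N}\circ\mathcal{E})$ from the standard identity $F(\mathcal{Q})=d_L^{-2}\sum_a|\operatorname{Tr}(K_a)|^2$, valid for any Kraus set $\{K_a\}$ of a channel $\mathcal{Q}$ on the $d_L$-dimensional logical space. The composed channel has Kraus operators $K_{lk}=W^\dagger\hat{N}_l^\dagger\,\Sigma^{-1/2}\hat{N}_k W$. Introduce $V:\mathbb{C}^{d_L}\otimes\mathbb{C}^{r}\to\mathcal{H}$, $V(\ket{\mu}\otimes\ket{k})=\hat{N}_k\ket{\mu_L}$; then directly $V^\dagger V=M$ and $VV^\dagger=\Sigma$. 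The one genuinely new identity is the singular-value-decomposition fact $V^\dagger(VV^\dagger)^{-1/2}V=\sqrt{V^\dagger V}=\sqrt{M}$ (writing $V=\sum_j s_j\ket{b_j}\bra{a_j}$ with $s_j>0$, both sides equal $\sum_j s_j\ket{a_j}\bra{a_j}$). Using $V(I_L\otimes\ket{k})=\hat{N}_k W$ one gets $K_{lk}=(I_L\otimes\bra{l})\,\sqrt{M}\,(I_L\otimes\ket{k})$, whose trace is $\sum_\mu(\sqrt{M})_{[\mu l],[\mu k]}=(\operatorname{Tr}_L\sqrt{M})_{l,k}$; summing the squared moduli gives $F(\mathcal{R}^{P}\circ\mathcal{N}\circ\mathcal{E})=d_L^{-2}\|\operatorname{Tr}_L\sqrt{M}\|_F^2=\tilde{F}^{\mathrm{opt}}$. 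Since $\mathcal{R}^{P}$ is a legitimate recovery, this already yields $F^{\mathrm{opt}}\geq\tilde{F}^{\mathrm{opt}}$, i.e. the right inequality of \eqref{eq:two_sided_bound}.

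For the left inequality I would invoke the near-optimality of the transpose channel \cite{10.1063/1.1459754, PhysRevA.81.062342}: for entanglement fidelity with respect to a fixed input state (here $I_L/d_L$), the transpose channel satisfies $F(\mathcal{R}^{P}\circ\mathcal{N}\circ\mathcal{E})\geq\left(\max_{\mathcal{R}}F(\mathcal{R}\circ\mathcal{N}\circ\mathcal{E})\right)^2=(F^{\mathrm{opt}})^2$. Hence $1-\tilde{F}^{\mathrm{opt}}\leq 1-(F^{\mathrm{opt}})^2=(1-F^{\mathrm{opt}})(1+F^{\mathrm{opt}})\leq 2(1-F^{\mathrm{opt}})$, which rearranges to $\frac{1}{2}(1-\tilde{F}^{\mathrm{opt}})\leq 1-F^{\mathrm{opt}}$. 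An alternative route that avoids the external input is to use that $F^{\mathrm{opt}}$ is a semidefinite program and to bound it by $F^{\mathrm{opt}}\leq\frac{1}{2}(1+\tilde{F}^{\mathrm{opt}})$ through an explicit dual-feasible point assembled from $\sqrt{M}$ and $\Sigma^{-1/2}$.

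The Kraus-fidelity identity and the SVD lemma are routine; the two steps that need care are the generalized-inverse and range bookkeeping for $\Sigma^{-1/2}$ (so that $\mathcal{R}^{P}$ really is a recovery and its completion is harmless) and, above all, the near-optimality bound $\tilde{F}^{\mathrm{opt}}\geq(F^{\mathrm{opt}})^2$, whose standard proof purifies $\mathcal{N}\circ\mathcal{E}$ and uses Uhlmann's theorem to relate $F^{\mathrm{opt}}$ to how decoupled the environment is from the reference. I expect that last step---or, on the alternative route, the construction of the SDP dual witness---to be the main obstacle.
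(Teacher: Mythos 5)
Your proposal is correct and follows essentially the same route as the paper: identify $\tilde F^{\mathrm{opt}}$ as the channel fidelity of the transpose/Petz recovery, evaluate it via the Kraus-trace formula $F=\frac{1}{d_L^2}\sum_a|\operatorname{Tr} K_a|^2$ to obtain $\frac{1}{d_L^2}\left\|\operatorname{Tr}_L\sqrt{M}\right\|_F^2$, and inherit the two-sided bound from the Barnum--Knill near-optimality $(F^{\mathrm{opt}})^2\le\tilde F^{\mathrm{opt}}\le F^{\mathrm{opt}}$ (which the paper reproves in its Supplemental Material rather than merely citing, but it is the same external theorem). Your only technical variation is extracting $\operatorname{Tr}(K_{lk})=(\operatorname{Tr}_L\sqrt{M})_{l,k}$ from the SVD identity $V^\dagger(VV^\dagger)^{-1/2}V=\sqrt{V^\dagger V}$ for the operator $V$ with $V^\dagger V=M$ and $VV^\dagger=\mathcal{N}(\hat{P}_L)$, whereas the paper expands $\mathcal{N}(\hat{P}_L)^{-1/2}$ in the error-subspace basis with coefficients $M^{-3/2}$ and contracts against $M$; the two computations are equivalent, with yours handling the rank-deficient (pseudoinverse) bookkeeping somewhat more transparently.
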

Worth noticing, the gap between the two-sided bound is proportional to the optimal infidelity. Therefore, when the code performs well against the noise channel, the near-optimal fidelity is a close approximation to the optimal fidelity. Equally importantly, it is remarkable that Eq.~\eqref{eq:near_opt_F} is only dependent on the QEC matrix. Therefore, our metric requires exactly the same resources as the KL conditions, but it provides a quantitative performance metric beyond a binary Yes-or-No output. Moreover, our result can further extended to general channel reversals~\cite{7404264, PhysRevA.101.032109}, subsystem codes~\cite{PhysRevLett.94.180501}, and mixed-state codes~\cite{PhysRevLett.82.2598}. Our result implies that the QEC matrix contains richer information about the code and noise structure beyond the KL conditions. For example, one can greatly reduce the complexity of optimization-based methods via adopting the error subspaces as an efficient basis to describe the action of the noise channel~\cite{SM, mohan2023generalized, johnston2023tight}.

The near-optimal fidelity is achievable by the transpose channel~\cite{10.1063/1.1459754, PhysRevA.81.062342} by construction, which is exactly where our expression inherits the two-sided bound. While there are other channels possessing similar near-optimal properties~\cite{10.1063/1.3463451, PhysRevLett.104.120501}, the transpose channel has the most concise fidelity expression. The derivation of Eq.~\eqref{eq:near_opt_F} is based on the observation that the QEC matrix is the Gram matrix of the error subspaces. When the QEC matrix is invertible, the error subspaces can be orthonormalized by the Gram matrix. In such an orthonormal basis, the transpose channel is equivalent to a measure-and-recover operation, which leads to the expression of Eq.~\eqref{eq:near_opt_F}. The derivation can be generalized to scenarios of degenerate QEC matrix~\cite{SM}.

Computationally, our approach has a drastically reduced cost compared to optimization-based methods. In many cases, the QEC matrix can be analytically computed. Otherwise, it is possible to efficiently compute the matrix depending on the code and noise of interest. In such cases, suppose we are considering $N_K$ number of noise Kraus operators, the cost of evaluating $\tilde{F}^{\text{opt}}$ is $\mathcal{O}\left((d_L N_K)^3\right)$. As a reference, the SDP for optimal recovery costs $\tilde{\mathcal{O}}\left((d_L N)^{5.246}\right)$~\cite{9317892}\footnote{$\tilde{\mathcal{O}}(g(n))$ is shortened for $\mathcal{O}(g(n)\log^k g(n))$}, where $N$ is the physical Hilbert space dimension. For example, if we consider qubit codes, $N$ scales exponentially with the number of qubits, $n$. However, it is sufficient numerically to truncate the number of noise Kraus operators to a polynomial scaling, $N_K\propto n^r$. In many cases, $r$ only depends on the target precision and physical error rates.

While the exact form of the near-optimal fidelity, Eq.~\eqref{eq:near_opt_F}, is an elegant expression, its advantage lies in its numerical complexity. The component of matrix square root makes it cumbersome to obtain an analytical expression for the near-optimal fidelity. Therefore, we develop the following corollary based on a perturbative decomposition of the QEC matrix.
% \paragraph{Perturbative form of the near-optimal fidelity.} Eq.~\eqref{eq:F^TC} is an elegant expression of the near-optimal fidelity. However, its advantage compared to other optimization-based approaches mostly lies in the numerical complexity, and it is not obvious how to analytically obtain the near-optimal fidelity. Following a similar spirit as the KL conditions, we can decompose the channel reversal matrix into 
\begin{corollary}\label{coro:Fe_pert}
     The noise channel's Kraus representation can be chosen such that $\frac{1}{d_L}\operatorname{Tr}_L M =D$, with $M$ being the QEC matrix and $D$ being a diagonal matrix. With the residual matrix $\Delta M := M - I_L \otimes D$, the near-optimal infidelity has a perturbative expansion through
    \begin{eqnarray}
    &&1 - \tilde{F}^{\text{opt}} \nonumber\\
    &=& \frac{1}{d_L}\norm{f(D)\odot \Delta M }_F^2 + \mathcal{O}\left(\frac{1}{d_L}\norm{f(D)\odot \Delta M }_F^3\right)\label{eq:F^TC_pert}
    \end{eqnarray}
    where $f(D) _{[\mu l],[\nu k]} = \frac{1}{\sqrt{D_{ll}} + \sqrt{D_{kk}}}$ and the Hadamard product $\left(A\odot B\right)_{ij} = A_{ij}B_{ij}$.
\end{corollary}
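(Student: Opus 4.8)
The plan is to start from the exact expression in Theorem~\ref{theo:Fe}, namely $\tilde F^{\text{opt}} = \frac{1}{d_L^2}\|\operatorname{Tr}_L\sqrt{M}\|_F^2$, and to expand $\sqrt{M}$ around the square root of the block-diagonal part $I_L\otimes D$. First I would verify that the stated gauge choice is always available: since $\frac{1}{d_L}\operatorname{Tr}_L M$ is a positive semidefinite $N_K\times N_K$ matrix, a unitary mixing of the Kraus operators $\hat N_k$ (which leaves the channel $\mathcal N$ and hence $M$'s spectrum invariant, acting as $M \mapsto (I_L\otimes U)M(I_L\otimes U)^\dagger$) can be chosen to diagonalize it; call the result $D$. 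With $\Delta M := M - I_L\otimes D$ we then have $\operatorname{Tr}_L \Delta M = 0$ by construction, which is the key identity that kills the first-order term.

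The main computation is the perturbative expansion of the matrix square root. Writing $M = I_L\otimes D + \Delta M$ and using the standard integral (or Daleckii--Krein) formula for the Fréchet derivative of $X\mapsto\sqrt X$, the first-order term $\sqrt{M}^{(1)}$ is the solution $Y$ of the Sylvester equation $(I_L\otimes\sqrt D)\,Y + Y\,(I_L\otimes\sqrt D) = \Delta M$; in the eigenbasis of $D$ this is exactly the Hadamard product $\sqrt{M}^{(1)} = f(D)\odot\Delta M$ with $f(D)_{[\mu l],[\nu k]} = (\sqrt{D_{ll}}+\sqrt{D_{kk}})^{-1}$. Then $\operatorname{Tr}_L\sqrt M = d_L\sqrt D + \operatorname{Tr}_L\sqrt{M}^{(1)} + \operatorname{Tr}_L\sqrt{M}^{(2)} + \cdots$. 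Crucially $\operatorname{Tr}_L\sqrt{M}^{(1)} = \operatorname{Tr}_L(f(D)\odot\Delta M)$; because $f(D)$ depends on the code-space indices $\mu,\nu$ only through being constant in them — $f(D)_{[\mu l],[\nu k]}$ is independent of $\mu,\nu$ — the partial trace factors through and gives $\sum_\mu (f(D))_{ll}(\Delta M)_{[\mu l],[\mu k]}$, which vanishes because $\operatorname{Tr}_L\Delta M=0$. Hence the linear correction to $\operatorname{Tr}_L\sqrt M$ is zero, and plugging into $\tilde F^{\text{opt}}$ and using $\|d_L\sqrt D\|_F^2 = d_L^2\operatorname{Tr} D = d_L\operatorname{Tr} M = d_L^2$ (so $\tilde F^{\text{opt}}$ starts at $1$) gives
\begin{align}
1-\tilde F^{\text{opt}} = -\frac{2}{d_L}\operatorname{Re}\operatorname{Tr}\!\big(\sqrt D\,(\operatorname{Tr}_L\sqrt{M}^{(2)})\big) - \frac{1}{d_L^2}\big\|\operatorname{Tr}_L\sqrt{M}^{(1)}\big\|_F^2 + \cdots,
\end{align}
wait — more carefully, the $\|\operatorname{Tr}_L\sqrt{M}^{(1)}\|_F^2$ term is also zero by the same argument, so the leading nonzero contribution is the cross term between the zeroth order $d_L\sqrt D$ and the second-order piece $\operatorname{Tr}_L\sqrt{M}^{(2)}$. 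I would therefore compute $\sqrt{M}^{(2)}$ from the second-order term of the square-root expansion, take $\operatorname{Tr}_L$, contract with $\sqrt D$, and show the result equals $-\frac{1}{d_L}\|f(D)\odot\Delta M\|_F^2$; equivalently, it is cleanest to use $\operatorname{Tr}(\sqrt D\,\operatorname{Tr}_L\sqrt{M}^{(2)}) = \operatorname{Tr}((I_L\otimes\sqrt D)\sqrt{M}^{(2)})$ together with the relation obtained by squaring $\sqrt M = I_L\otimes\sqrt D + \sqrt{M}^{(1)} + \sqrt{M}^{(2)}+\cdots$ and matching the second-order terms: $(I_L\otimes\sqrt D)\sqrt{M}^{(2)} + \sqrt{M}^{(2)}(I_L\otimes\sqrt D) = -(\sqrt{M}^{(1)})^2$, whence $2\operatorname{Tr}((I_L\otimes\sqrt D)\sqrt{M}^{(2)}) = -\operatorname{Tr}((\sqrt{M}^{(1)})^2) = -\operatorname{Tr}((f(D)\odot\Delta M)^2)$. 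Since $f(D)\odot\Delta M$ is Hermitian this last trace equals $\|f(D)\odot\Delta M\|_F^2$, giving exactly the claimed leading term.

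The main obstacle I anticipate is bookkeeping the error term honestly: one must argue that all neglected contributions are $\mathcal O\big(\frac{1}{d_L}\|f(D)\odot\Delta M\|_F^3\big)$ rather than merely $\mathcal O(\|\Delta M\|^3)$. This requires (i) controlling the convergence of the square-root series, which needs $\|D^{-1/2}\Delta M D^{-1/2}\|<1$ in operator norm — valid when $D$ is strictly positive and $\Delta M$ small — and (ii) noting that at each higher order every factor of $\Delta M$ is accompanied by a resolvent $\sim D^{-1/2}$ that reproduces the $f(D)$ weighting, together with the repeated use of $\operatorname{Tr}_L\Delta M=0$ to discard terms linear in a single "bare" $\Delta M$. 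I would package (i)–(ii) via the integral representation $\sqrt M = \frac{1}{\pi}\int_0^\infty \big( I - t(M+t)^{-1}\big)\frac{dt}{\sqrt t}$, expanding $(M+t)^{-1}$ in a Neumann series around $(I_L\otimes D+t)^{-1}$; the $\mu$-independence of $(I_L\otimes D+t)^{-1}$ makes the $\operatorname{Tr}_L$-against-$\Delta M$ cancellations transparent order by order, and the remaining $t$-integrals just reproduce $f(D)$ and its higher-order analogues. Degenerate or singular $D$ would be handled as in the exact case by the limiting/pseudo-inverse argument referenced in the Supplemental Material.
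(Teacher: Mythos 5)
Your proposal is correct and follows essentially the same route as the paper: decompose $M = I_L\otimes D + \Delta M$ in the diagonalizing Kraus gauge, expand $\sqrt{M}$ about $I_L\otimes\sqrt{D}$ via the Daleckii--Krein/Fr\'echet-derivative machinery, use $\operatorname{Tr}_L\Delta M = 0$ (with $f(D)$ constant in the code-space indices) to kill the first-order term, and extract the leading infidelity from the cross term with $d_L\sqrt{D}$, exactly as in the Supplemental Material. Your evaluation of the second-order contribution by tracing the Sylvester relation $(I_L\otimes\sqrt D)\sqrt{M}^{(2)} + \sqrt{M}^{(2)}(I_L\otimes\sqrt D) = -\bigl(\sqrt{M}^{(1)}\bigr)^2$ is a slightly cleaner substitute for the paper's explicit second-order Daleckii--Krein formula, and your explicit justification of the gauge choice and of the $\mathcal{O}\bigl(\tfrac{1}{d_L}\|f(D)\odot\Delta M\|_F^3\bigr)$ remainder is, if anything, more careful than the paper's.
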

This corollary is proved with the Daleckii-Krein theorem of matrix square root expansions~\cite{SM, DaletskiiKrein1965, carlsson2018perturbation}. Eq.~\eqref{eq:F^TC_pert} conveniently expresses the infidelity as a function of $\Delta M$ and $D$ instead of the square root of $M$, thus making it tangible to obtain analytical expressions. Physically, $\frac{1}{d_L}I_L\otimes\operatorname{Tr}_L M$ and $\Delta M$ corresponds to the correctable and uncorrectable QEC matrix respectively. In Corollary.~\ref{coro:Fe_pert}, we applied a unitary to diagonalize the correctable matrix. However, it is generally nontrivial to analytically express the unitary. As a compromise, $D$ can be instead defined as the diagonal entries of the correctable matrix, $\text{diag}(\frac{1}{d_L}\operatorname{Tr}_L M) = \text{diag}(D)$. Such a truncation overestimates the error, but its effect is negligible as long as the off-diagonal yet correctable elements are sufficiently small~\cite{SM}. 
% For simplicity, we assume we work with $D = \frac{1}{d}\operatorname{Tr}_L M$ in the following without specifications.

It is important to note that the infidelity includes contributions from the uncorrectable matrix modulated by a function of the correctable matrix, $f(D)$. Intuitively, while the overlaps between error subspaces lead to uncorrectable errors, their effects are weighted by the probabilities of their respective quantum trajectories, which are contained in $D$. Past works~\cite{PhysRevA.97.032346, olle2023simultaneous, PhysRevLett.131.050601, PhysRevA.106.022431, korolev2023error, Cao_2022} have proposed code performance estimators based on the QEC matrix to perform efficient code optimizations. Nonetheless, they mostly only took into account the uncorrectable matrix, $\Delta M$, or attempted to consider the effects of $D$ in heuristic ways. As a comparison, Eq.~\eqref{eq:F^TC_pert} suggests a combination of effects from $\Delta M$ and $D$ with guaranteed performance.

A few observations can be drawn from Eq.~\eqref{eq:F^TC_pert}. For example, for the near-optimal and optimal fidelity, the error caused by the uncorrectable matrix is suppressed quadratically. While the quadratic scaling was also observed in Ref.~\cite{PhysRevLett.107.080501}, our result is not limited to one-parameter family of channels and instead presents the full expression. Moreover, Eq.~\eqref{eq:F^TC_pert} has a noteworthy property: if $\Delta M$ is traceless, the code is an exact QEC code when the perturbative form vanishes~\cite{SM}. This is useful for applications that require vanishing error probability, such as computing the achievable rates~\cite{PhysRevA.64.062301}.

\begin{figure}[t]
    \centering
    \includegraphics[width = 0.45 \textwidth]{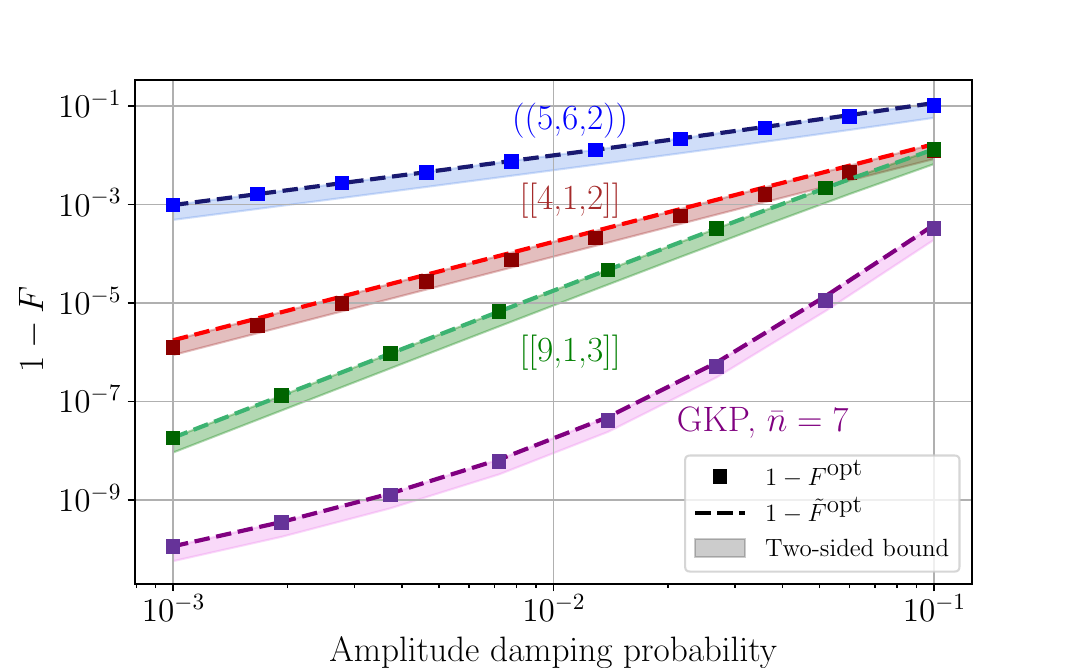}
    % \captionsetup{justification=justified}
     \caption{Optimal infidelity for qubit codes $[[4, 1, 2]]$, $((5, 6, 2))$, and $[[9, 1, 3]]$ and the GKP code with $\bar{n}=7$. The noise channel is amplitude damping noise, i.e. loss for oscillators. The shaded regions represent the optimal infidelity intervals bounded by the two-sided bound given by the near-optimal infidelity in Eq.~\eqref{eq:two_sided_bound}.}
    \label{fig:verification_qubit}
\end{figure}

\textit{Examples.}---In Fig.~\ref{fig:verification_qubit}, we numerically validate the two-sided bound presented in Eq.~\eqref{eq:two_sided_bound} for qubit codes under amplitude damping noise~\cite{SM}, which is a practical but non-Pauli noise channel. We adopt the convention that $[[n, k, d]]$ represents encoding $k$ logical qubits in $n$ physical qubits with distance $d$, while for $((n, k, d))$, $k$ represents the logical dimension instead. Even for well-studied codes like stabilizer codes, their optimal decoders under non-Pauli noise are in general unknown. The shown qubit codes include the classic $[[9, 1, 3]]$ stabilizer code~\cite{1996Steane} and approximate $[[4, 1, 2]]$~\cite{PhysRevA.56.2567} code. Another example is the $((5, 6, 2))$ code~\cite{PhysRevLett.79.953}, which is a qudit non-stabilizer code. 

\begin{figure}[t]
    \centering
    \includegraphics[width = 0.45 \textwidth]{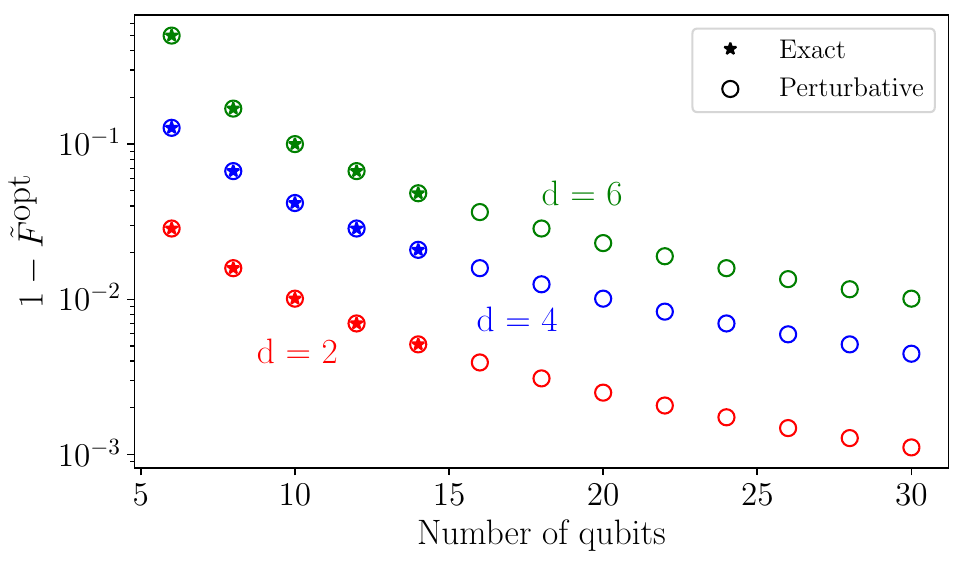}
    %  \caption{
    %  Comparison of the original cat gate scheme with the modified gate scheme with SC encoding of optimized $T_{cool}, T_{gate}, r$ while fixing $\bar{n} = 4$. The inset diagram is a zoomed-in plot of the SC gates' errors around their optimal points. (a) Z-error of a CX gate. (b) Z-error of a Z rotation of angle $\pi$.}
      \caption{The near-optimal infidelity of the thermodynamic code with distance $d$ (see Ref.~\cite{SM} for the definition of the codewords) and $l=1$, i.e. single erasure error. The stars represent the exact approach where the QEC matrix is numerically computed, and the circles represent the perturbative expression given in Eq.~\eqref{eq:thermo_pert}.}
    \label{fig:thermodynamic_code}
\end{figure}

The contributions of our expression lie not only in numerics but also in analytical aspects. We take the thermodynamic code~\cite{PhysRevLett.123.110502, SM} as an example. The code has sparked interest because of its close connections with the Eigenstate Thermalization Hypothesis~\cite{PhysRevE.50.888}, as well as being an instance of a covariant code~\cite{PhysRevX.10.041018}. The precise definition of the codewords are given in Ref~\cite{SM}, and they are characterized by the distance $d$. When considering a constant number of erasure errors and thermodynamic limit, the optimal infidelity was proven to scale as $1 - F^{\text{opt}} = \Theta(\frac{1}{N^2})$~\cite{PhysRevX.10.041018}, where $N$ is the number of qubits. Beyond a scaling argument, our expression makes it possible to derive the near-optimal infidelity analytically~\cite{SM},
\begin{eqnarray}
    1 - \tilde{F}^{\text{opt}} = \frac{l}{16} \frac{d^2}{N^2} + \mathcal{O}\left(\frac{1}{N^3}\right)\label{eq:thermo_pert}.
\end{eqnarray}
% \textcolor{red}{The appearance of the distance $d$ in Eq.~(\ref{eq:thermo_pert}) is important, as it implies that the thermodynamic code cannot tolerate erasure with constant probability: for $\mathcal{O}(n)$ erasure errors and $\mathcal{O}(n)$ distance, the infidelity is a constant independent of system size.} 
for $l$ erasure errors. The derivation of Eq.~(\ref{eq:thermo_pert}) with our formalism is straightforward and can be easily extended to consider more erasure errors~\cite{SM}. The comparison of the perturbative and the exact forms of the near-optimal infidelity is shown in Fig.~\ref{fig:thermodynamic_code}, where it is clear they closely follow each other. For the exact form, the QEC matrix is numerically computed, and the simulation stops at 14 qubits because of the increased computational cost. To compare, if we attempt to optimize for the optimal fidelity, it is only possible for less than 5 qubits under the same time constraint.

\begin{figure*}[t]
    \centering
    \includegraphics[width = 0.9 \textwidth]{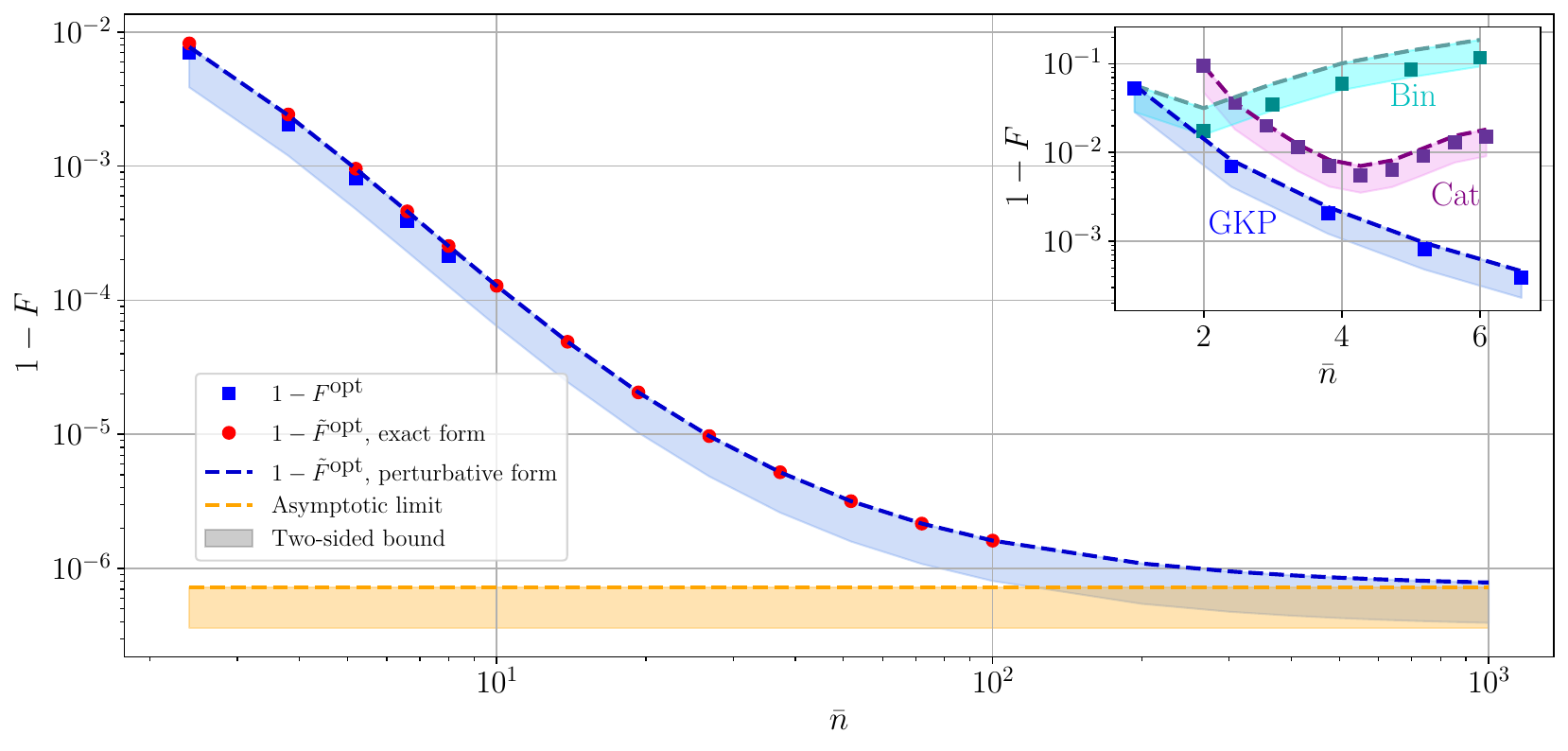}
    %  \caption{
    %  Comparison of the original cat gate scheme with the modified gate scheme with SC encoding of optimized $T_{cool}, T_{gate}, r$ while fixing $\bar{n} = 4$. The inset diagram is a zoomed-in plot of the SC gates' errors around their optimal points. (a) Z-error of a CX gate. (b) Z-error of a Z rotation of angle $\pi$.}
     \caption{The channel infidelity of square lattice GKP qubit code with $\gamma = 10\%$ loss. All shaded regions represent the two-sided bound on the optimal fidelity. The red circles represent the exact form of the near-optimal infidelity, where the QEC matrix is computed analytically. The blue dashed line is the perturbative form, evaluated through a closed-form analytical expression~\cite{OtherPaper}. The inset figure shows the performance of GKP codes (square lattice), cat codes (S=4), and binomial codes (S=1)~\cite{SM}. The squares (dashed lines) represent the optimal (near-optimal) fidelity.}
    \label{fig:gkp_bosonic_codes}
\end{figure*}

% \begin{figure}[t]
%     \centering
%     \includegraphics[width = 0.45 \textwidth]{combined_new_3.pdf}
%     %  \caption{
%     %  Comparison of the original cat gate scheme with the modified gate scheme with SC encoding of optimized $T_{cool}, T_{gate}, r$ while fixing $\bar{n} = 4$. The inset diagram is a zoomed-in plot of the SC gates' errors around their optimal points. (a) Z-error of a CX gate. (b) Z-error of a Z rotation of angle $\pi$.}
%      \caption{\textcolor{red}{Added.} The channel infidelity of square lattice GKP qubit code under loss. The loss rate is $\gamma = 0.1$. The blue dashed lines with (without) squares represent the near-optimal infidelity upper (lower) bound given by the optimal infidelity. The red triangles represent the exact form of the near-optimal fidelity, where the QEC matrix is computed analytically. The green line is the perturbative form, \textcolor{red}{which has a closed-form analytical expression}. The inset figure shows the optimal channel infidelity of GKP codes (square lattice), cat codes (S=4), and binomial codes (S=1)~\cite{SM}.}
%     \label{fig:gkp_bosonic_codes}
% \end{figure}

Bosonic codes, or oscillator codes, are codes that encode a $d_L$-dimensional logical space in the infinite-dimensional Hilbert space of oscillator(s). For a bosonic code, the average excitation number, $\bar{n} := \Tr\left(\hat{n}\hat{P}_L\right)$, affects code properties and is also a parameter of experimental interest. In the inset plot of Fig.~\ref{fig:gkp_bosonic_codes}, we show the optimal performance of a few popular bosonic codes under excitation loss~\cite{SM}, including the binomial code~\cite{PhysRevX.6.031006}, the cat code~\cite{NatureLescanne2020}, and the GKP code~\cite{PhysRevA.64.012310}. An emerging feature is that for the binomial and cat codes, their performance do not improve monotonically with energy with fixed $S$. Here, $S$ represents the fock basis spacing, which can be understood as the distance against loss. On the contrary, the GKP code's performance improves monotonically for the range of $\bar{n}$ shown. Similar numerical observations were made in previous works~\cite{PhysRevA.97.032346, PhysRevX.6.031006}, where heuristic explanations were given based on the QEC matrix. Our expression Eq.~\eqref{eq:F^TC_pert} supports their arguments with rigor: like exact qubit codes, the cat code and binomial code exactly correct no more than $S$ excitation loss. However, they are completely unprotected against more than $S$ loss, which is more likely to occur at larger $\bar{n}$. Therefore, the infidelity is not suppressed with energy. 

The critical difference of GKP codes is that while it generally cannot even correct a single loss, the uncorrectable elements for all orders of loss are suppressed by a factor of $e^{-\frac{\gamma}{\gamma + \frac{1}{\bar{n}}}}$~\cite{PhysRevA.97.032346}. One open question was whether the suppression of the infidelity holds asymptotically: optimizing for the optimal fidelity was only possible for $\bar{n}\leq 10$~\cite{PhysRevA.97.032346} because the Hilbert space size quickly becomes unmanageable. The near-optimal fidelity solves the problem. Firstly, with the QEC matrix being analytically computable, the only complexity cost for the exact form in Eq.~\eqref{eq:near_opt_F} is to compute the matrix square root. Thus, it is much cheaper than SDP optimizations, and the numerical results can reach $\bar{n}\sim 10^2$. Secondly, we can express the near-optimal performance analytically with the perturbative form~\cite{OtherPaper}, which reveals the near-optimal fidelity for arbitrarily large $\bar{n}$. In Fig.~\ref{fig:gkp_bosonic_codes}, we demonstrate the results for a GKP square code under loss rate $\gamma=0.1$. At asymptotically large $\bar{n}$, the perturbative expression converges as
\begin{eqnarray}
    \lim_{\bar{n}\to \infty} 1 - \tilde{F}^{\text{opt}} = e^{-\frac{\pi}{2}\frac{1-\gamma}{\gamma}},
\end{eqnarray}
which is approximately $7.2\times 10^{-7}$ for $\gamma=0.1$. As a comparison, the best decoder with known performance at infinite energy is the amplification decoder (AD)~\cite{PhysRevA.97.032346, 8482307}. For the same loss, AD gives a logical error rate of $1 - F^{\text{AD}} \approx e^{-\frac{\pi}{8}\frac{1-\gamma}{\gamma}} = 2.9\times 10^{-2}$. Therefore, while the exceptionally low near-optimal fidelity highlights the promises of the GKP code construction, its gap with the performance of the known decoders emphasizes the potential gain in improving GKP decoders. Similar analysis can be performed for general multi-mode GKP encodings~\cite{OtherPaper}.

\textit{Discussion.}---We have derived the near-optimal channel fidelity as a quantitative metric for arbitrary codes and noise channels. Our metric is closely related to optimal code performances through narrow two-sided bounds. Since the near-optimal fidelity only requires the QEC matrix as input, it generalizes the KL conditions beyond distinguishing exact QEC codes with the same resource cost. To conclude, the proposed metric and its perturbative form reduce computational costs for numerical simulations and enable us to obtain analytical descriptions of code performances.

Our new approach opens doors to numerical benchmarking of large-sized codes and oscillators encoding high-energy states. The benchmarking result can, in turn, be used to optimize or guide the discovery of novel encoding and efficient decoding for realistic noises. Moreover, it is critical to understand the near-optimal performance of codes in asymptotic limits for many concepts and settings in quantum information theory, such as the achievable rate of a code family. 

\begin{acknowledgments}
We thank Victor Albert, Kyungjoo Noh, Sisi Zhou, and Qian Xu for helpful discussions. We acknowledge support from the ARO(W911NF-23-1-0077), ARO MURI (W911NF-21-1-0325), AFOSR MURI (FA9550-19-1-0399, FA9550-21-1-0209), AFRL (FA8649-21-P-0781), DoE Q-NEXT, NSF (OMA-1936118, ERC-1941583, OMA-2137642), NTT Research, and the Packard Foundation (2020-71479). This material is based upon work partially supported by the U.S. Department of Energy, Office of Science, National Quantum Information Science Research Centers. The authors are also grateful for the support of the University of Chicago Research Computing Center for assistance with the numerical simulations carried out in this paper.
\end{acknowledgments}

% \begin{acknowledgments}
% Acknowledge
% \end{acknowledgments}

%apsrev4-2.bst 2019-01-14 (MD) hand-edited version of apsrev4-1.bst
%Control: key (0)
%Control: author (72) initials jnrlst
%Control: editor formatted (1) identically to author
%Control: production of article title (-1) disabled
%Control: page (0) single
%Control: year (1) truncated
%Control: production of eprint (0) enabled
%

\clearpage

% \pagebreak
% \break

\widetext
\begin{center}
\textbf{\large Supplemental Material: "The near-optimal performance of quantum error correction codes"}
\end{center}
%%%%%%%%%% Merge with supplemental materials %%%%%%%%%%
%%%%%%%%%% Prefix a "S" to all equations, figures, tables and reset the counter %%%%%%%%%%
\setcounter{equation}{0}
\setcounter{figure}{0}
\setcounter{table}{0}
\setcounter{page}{1}
\makeatletter
\renewcommand{\theequation}{S\arabic{equation}}
\renewcommand{\thefigure}{S\arabic{figure}}
\renewcommand{\bibnumfmt}[1]{[#1]}
\renewcommand{\citenumfont}[1]{#1}

% \appendix 
% \newpage

\section{Convex optimization for recovery}

The problem setup for channel reversal is that we want to revert a general channel $\mathcal{A}: \mathcal{B}\left(\mathcal{H}_{d_L}\right)\to\mathcal{B}\left(\mathcal{H}_n\right)$ with a physical recovery, $\mathcal{R}: \mathcal{B}\left(\mathcal{H}_n\right)\to\mathcal{B}\left(\mathcal{H}_{d_L}\right)$. Then, we would want to minimize the distance between the composed channel, $\mathcal{R}\circ \mathcal{A}$, and the identity channel. One of the distance metrics is the channel fidelity. For a quantum channel $\mathcal{Q}: \mathcal{B}\left(\mathcal{H}_{d_L}\right)\to\mathcal{B}\left(\mathcal{H}_{d_L}\right)$ with an operator sum representation, $\mathcal{Q}\left(\rho\right) = \sum_{i}Q_i \rho Q_i^\dagger$, the channel fidelity is defined as
\begin{eqnarray}
    F\left(\mathcal{Q}\right) &:=& \bra{\Phi} \mathcal{Q} \otimes\mathcal{I}_R\left(\ket{\Phi}\bra{\Phi}\right)\ket{\Phi}\\
    &=& \frac{1}{d_L^2}\sum_i \abs{\text{Tr}\hat{Q}_i}^2 \label{eq:ent_fid_expr}\\
    &=& \frac{1}{d_L^2} \text{Tr} X_{\mathcal{Q}}
\end{eqnarray}
where $\ket{\Phi}$ is the purified maximally mixed state, $\mathcal{I}_R$ is the identity channel acting on the \emph{reference} ancillary system, $d$ is the input and output Hilbert space size, and $X_{\mathcal{Q}}$ is the superoperator form~\cite{Caves1999} (also known as the matrix representation~\cite{nielsen_chuang_2010}) of $\mathcal{Q}$. If $\mathcal{Q} = \mathcal{R}\circ \mathcal{A}$, a nice property of the channel fidelity is that it can be expressed as a linear function of the Choi matrix form of $\mathcal{R}$:
\begin{eqnarray}
    F\left(\mathcal{R}\circ \mathcal{A}\right) &=& \frac{1}{d_L^2} \text{Tr} \left(X_{\mathcal{R}}X_{\mathcal{A}}\right) = \frac{1}{d_L^2} \text{Tr} \left(C_\mathcal{R} \tilde{C}_{\mathcal{A}}\right)
\end{eqnarray}
where the Choi matrix $C_\mathcal{M}$ of any channel $\mathcal{M}$ is defined to have an index form of $\left(C_\mathcal{M}\right)_{[\mu\rho], [\nu\sigma]}= \bra{\rho}\mathcal{M}\left(\ket{\mu_{L}}\bra{\nu_{L}}\right)\ket{\sigma}$. We also define the linear map $\left(\tilde{C}_\mathcal{M}\right)_{[\mu\rho], [\nu\sigma]} = \left(C_\mathcal{M}\right)_{[\sigma\nu], [\rho\mu]}$ to make the linear dependence of $F$ on $C_{\mathcal{R}}$ more apparent. Moreover, the constraints on the recovery channel being physical, i.e. a completely positive trace-preserving map, can also be expressed through linear functions of $C_{\mathcal{R}}$. As a result, we can formulate the channel reversal problem as a convex optimization problem. The optimal recovery $\mathcal{R}^{\text{opt}}$ and its performance can be found through
\begin{eqnarray}
    \max_{C_\mathcal{R}}& \Tr\left\{ C_\mathcal{R} \tilde{C}_{\mathcal{A} }\right\}, \\
    \text{s.t. }& C_\mathcal{R}\succeq 0, \Tr_{\mathcal{H}_n} C_\mathcal{R} = I_{d_L}
\end{eqnarray}
which is a semidefinite program (SDP). Here, the partial trace notation $\Tr_{\mathcal{H}_n}$ denotes tracing out the indices of the input system, and $I_d$ is a $d_L\times d_L$ identity matrix. The first condition ensures the recovery to be a completely positive map, while the second represents the constraint of trace-preserving maps. 

In the context of quantum error correction (QEC), the target channel to be reverted is composed of the noise channel and the encoding channel, $\mathcal{A} = \mathcal{N} \circ \mathcal{E}$. The scenario considered by the original Knill-Laflamme conditions is the special case of pure state encodings, where $\mathcal{E}$ is an isometry. The optimization result specifies the optimal recovery and the optimal channel fidelity, $F^{\text{opt}} = F\left(\mathcal{R}^{\text{opt}}\circ\mathcal{N}\circ\mathcal{E}\right)$. Similarly, we can utilize the permutation invariant property of the trace and similarly optimize the encoding. Since we optimize for the Choi matrix of the recovery and/or encoding, the computational complexity is a high-order polynomial or even exponential function of the system size.

% \textcolor{red}{Should we mention this?} 
Worth noticing, the Fock state basis may not be the most efficient basis for representing the Choi matrix and, more generally, the code after noise. We can learn from the transpose channel form and devise better forms to drastically reduce the complexity required. Some recent works on antidistinguishability~\cite{johnston2023tight} and quantum state discrimination~\cite{mohan2023generalized} adopted similar ideas to improve their SDP forms, and the improvement can be also extended to QEC.

\section{Transpose channel}

The transpose channel (TC)~\cite{10.1063/1.1459754, PhysRevA.81.062342} is a member of the near-optimal channel family. It has close connections with pretty good measurements~\cite{1975RaEl...20.1177B, doi:10.1137/1123048, Hausladen1994AG} and has applications in quantum information theory~\cite{PhysRevA.77.034101}, quantum thermodynamics~\cite{10.1116/5.0060893, PhysRevX.9.031029, Buscemi_2021}, and many others. There is also active effort in devising implementation schemes. There have been efforts concerning the systematic implementation of general CPTP maps in both bosonic~\cite{PhysRevB.95.134501} and qubit systems~\cite{PhysRevA.95.052316}. They can be fault-tolerant through techniques such as path-independent gates~\cite{Rosenblum_2018, PhysRevLett.125.110503}. Focusing on the transpose channel, Ref.~\cite{PhysRevLett.128.220502} proposed an algorithm to implement a Petz channel based on quantum singular value transformation, block encoding, and a set of unitary gates. Moreover, Ref.~\cite{biswas2023noiseadapted} proposed and compared several implementation schemes of the transpose channel on practical devices.

For a given logical code space projector $\hat{P}_L$ and noise channel $\mathcal{N}\sim\set{\hat{N}_i}$, the transpose channel admits a Kraus operator representation of $\mathcal{R}^{\text{TC}}\sim \set{\hat{R}_i^{\text{TC}}}$, where 
\begin{eqnarray}
    \hat{R}^{\text{TC}}_i := \hat{P}_L \hat{N}_i^\dagger \mathcal{N}\left(\hat{P}_L\right)^{-\frac{1}{2}}
\end{eqnarray}
for $i = 0, \dots, L-1$. Here, we define $\hat{P}_L$ to be the projector onto the logical code space. The inverse of $\mathcal{N}\left(\hat{P}\right)$ is defined on its support, which is spanned by the error subspaces. 

The transpose channel has a critical property: its fidelity gives a two-sided bound of the optimal recovery fidelity. While there exist other recovery channels with similar two-sided bounds, such as the quadratic recovery~\cite{10.1063/1.3463451}, the near-optimal expression we obtained from the transpose channel is the most concise and informative by far. For completeness, we provide a proof for the near-optimality.

\begin{lemma} [Near-optimality of transpose channel; \cite{10.1063/1.1459754}, Theorem 2] For any QEC code with encoding channel $\mathcal{E}$ and noise channel $\mathcal{N}$, the optimal channel fidelity $F^{\mathrm{opt}}=F\left(\mathcal{R}^{\mathrm{opt}} \circ \mathcal{N} \circ \mathcal{E}\right)$ is bounded by the channel fidelity of the transpose channel $F\left(\mathcal{R}^{\mathrm{TC}} \circ \mathcal{N} \circ \mathcal{E}\right)$ through
    \begin{align}
            \frac{1}{2}\left(1 - F(\mathcal{R}^{\text{TC}}\circ\mathcal{N} \circ \mathcal{E}) \right)\leq 1 - F^{\text{opt}} \leq 1 -  F(\mathcal{R}^{\text{TC}}\circ\mathcal{N} \circ \mathcal{E}).
    \end{align}
\end{lemma}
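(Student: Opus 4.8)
\emph{Proof plan.} The plan is to dispose of the upper bound as a triviality and to reduce the lower bound to the Barnum--Knill estimate $F^{\mathrm{TC}}\ge(F^{\mathrm{opt}})^{2}$, where $F^{\mathrm{TC}}:=F(\mathcal R^{\mathrm{TC}}\circ\mathcal N\circ\mathcal E)$.

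The right-hand inequality $1-F^{\mathrm{opt}}\le 1-F^{\mathrm{TC}}$ is immediate: the transpose channel $\mathcal R^{\mathrm{TC}}$, completed to a CPTP map, is one admissible recovery, while $F^{\mathrm{opt}}$ is the maximum of $F(\mathcal R\circ\mathcal N\circ\mathcal E)$ over all recoveries $\mathcal R$. The one point to check is that completing $\{\hat R_i^{\mathrm{TC}}\}$ to a trace-preserving map leaves the fidelity unchanged: the extra Kraus operators act on the orthogonal complement of $\operatorname{supp}\mathcal N(\hat P_L)$, whereas every $\hat N_k$ sends the code space into $\operatorname{supp}\mathcal N(\hat P_L)$, so those terms annihilate $\mathcal N\circ\mathcal E$ and may be discarded. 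For the left-hand inequality, once $F^{\mathrm{TC}}\ge(F^{\mathrm{opt}})^{2}$ is known we are done, since $1-F^{\mathrm{TC}}\le 1-(F^{\mathrm{opt}})^{2}=(1-F^{\mathrm{opt}})(1+F^{\mathrm{opt}})\le 2(1-F^{\mathrm{opt}})$; the factor of $2$ in the theorem originates in this elementary step.

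To prove $F^{\mathrm{TC}}\ge(F^{\mathrm{opt}})^{2}$ I would follow Barnum and Knill and establish the stronger \emph{pointwise} statement that $F(\mathcal R\circ\mathcal C)\le\sqrt{F(\mathcal R^{\mathrm{TC}}\circ\mathcal C)}$ for \emph{every} recovery $\mathcal R$, where $\mathcal C:=\mathcal N\circ\mathcal E$, and then specialize to $\mathcal R=\mathcal R^{\mathrm{opt}}$. Working with operators on the physical Hilbert space, set $\rho:=\hat P_L/d_L$ and $\sigma:=\mathcal C(\rho)=\mathcal N(\hat P_L)/d_L$, and restrict everything to $\operatorname{supp}\rho$ and $\operatorname{supp}\sigma$ so that $\rho$ and $\sigma$ are invertible there --- this is the only place where a possibly degenerate QEC matrix requires attention, the inverses being read as pseudoinverses. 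Using the Kraus form of the channel fidelity, $F(\mathcal Q)=d_L^{-2}\sum_a|\operatorname{Tr}\hat Q_a|^{2}$, with Kraus families $\{\hat N_k\}$ of $\mathcal N$ and (suitably represented) $\{\hat R_j\}$ of $\mathcal R$, insert $\sigma^{1/2}\sigma^{-1/2}$ between $\hat R_j$ and $\hat N_k$ to factor each amplitude as a Hilbert--Schmidt inner product,
\[
\operatorname{Tr}\bigl(\rho\,\hat R_j\hat N_k\bigr)=\bigl\langle\,\sigma^{1/2}\hat R_j^{\dagger}\rho^{1/2}\,,\ \sigma^{-1/2}\hat N_k\rho^{1/2}\,\bigr\rangle_{\mathrm{HS}},
\]
and observe that the second argument equals $(\hat R_k^{\mathrm{TC}})^{\dagger}$. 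Hence $F(\mathcal R\circ\mathcal C)$ is a sum of squared overlaps of the ``recovery-side'' operators $\sigma^{1/2}\hat R_j^{\dagger}\rho^{1/2}$ with the \emph{fixed} family $\{(\hat R_k^{\mathrm{TC}})^{\dagger}\}$ --- which is precisely the structural feature singling out the transpose channel. Carrying out the sum over $k$ contracts the transpose-channel operators into a single positive operator, the trace-preservation identity $\sum_j\hat R_j^{\dagger}\hat R_j=I$ then constrains the sum over $j$, and a Cauchy--Schwarz estimate --- applied first to the Hilbert--Schmidt inner product and then across the index $j$ --- combined with that normalization produces the square root and yields $F(\mathcal R\circ\mathcal C)\le\sqrt{F(\mathcal R^{\mathrm{TC}}\circ\mathcal C)}$.

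I expect the Cauchy--Schwarz step to be the main obstacle: it has to be arranged so that the right-hand side comes out as exactly $\sqrt{F(\mathcal R^{\mathrm{TC}}\circ\mathcal C)}$ and not a weaker, dimension-dependent bound, and this is the precise point at which the particular form $\hat R_i^{\mathrm{TC}}=\hat P_L\hat N_i^{\dagger}\mathcal N(\hat P_L)^{-1/2}$ of the transpose channel enters essentially --- a generic recovery sitting on that side of the factorization would not close the estimate. Beyond this, some care is needed with the gauge freedom in the Kraus decompositions (the fidelity is gauge invariant, but the factorization above must be carried out in a fixed gauge) and with the fact that the amplitudes $\operatorname{Tr}(\rho\,\hat R_j\hat N_k)$ are in general complex, so moduli must be tracked throughout. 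The support restriction for a degenerate QEC matrix, and the algebra in the reduction above, are routine.
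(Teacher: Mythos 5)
Your overall architecture coincides with the paper's, which itself follows Barnum--Knill: the upper bound is the trivial suboptimality of $\mathcal R^{\mathrm{TC}}$ as one admissible recovery, the factor of $2$ comes from $1-(F^{\mathrm{opt}})^2=(1-F^{\mathrm{opt}})(1+F^{\mathrm{opt}})\le 2(1-F^{\mathrm{opt}})$, and everything reduces to $(F^{\mathrm{opt}})^2\le F^{\mathrm{TC}}$, to be proved by two applications of Cauchy--Schwarz plus the trace-preservation constraint. Those reductions, and your remarks on supports and pseudoinverses, are fine and match the paper.

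The gap sits exactly at the step you flagged, and the factorization you commit to is the one that does \emph{not} close it. Writing $A_j=\sigma^{1/2}\hat R_j^{\dagger}\rho^{1/2}$ and $T_k=(\hat R_k^{\mathrm{TC}})^{\dagger}=\sigma^{-1/2}\hat N_k\rho^{1/2}$ gives $F(\mathcal R\circ\mathcal C)=\operatorname{Tr}(\Pi_A\Pi_T)$ with $\Pi_A=\sum_j|A_j\rangle\langle A_j|$, $\Pi_T=\sum_k|T_k\rangle\langle T_k|$ on Hilbert--Schmidt space, and the Cauchy--Schwarz you describe yields $\sqrt{\operatorname{Tr}\Pi_A^2\,\operatorname{Tr}\Pi_T^2}$. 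But $\operatorname{Tr}\Pi_T^2=\sum_{k,l}|\langle T_k,T_l\rangle|^2$ is built from the Gram matrix $\langle T_k,T_l\rangle=\operatorname{Tr}\bigl(\hat P_L\hat N_k^{\dagger}\mathcal N(\hat P_L)^{-1}\hat N_l\bigr)$, with a full inverse, whereas the transpose-channel amplitudes carry $\mathcal N(\hat P_L)^{-1/2}$. These differ materially: for a nondegenerate QEC matrix one finds $\langle T_k,T_l\rangle=(\operatorname{Tr}_L M^{0})_{kl}=d_L\delta_{kl}$, so $\operatorname{Tr}\Pi_T^2=d_L^2 N_K$ --- precisely the dimension-dependent constant you feared --- rather than $d_L^2 F^{\mathrm{TC}}$; and the cruder route $|\langle A_j,T_k\rangle|^2\le\|A_j\|^2\|T_k\|^2$ fares no better, since $\sum_k\|T_k\|^2=\operatorname{rank}\,\mathcal N(\hat P_L)$. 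The missing idea is the \emph{balanced} splitting $\mathcal N(\hat P_L)^{-1/2}=\mathcal N(\hat P_L)^{-1/4}\cdot\mathcal N(\hat P_L)^{-1/4}$ used in the paper: with $T_k=\sigma^{-1/4}\hat N_k\rho^{1/2}$ and $A_j=\sigma^{1/4}\hat R_j^{\dagger}\rho^{1/2}$ one still has $\operatorname{Tr}(\rho\hat R_j\hat N_k)=\langle A_j,T_k\rangle$, but now $\operatorname{Tr}\Pi_T^2=d_L F^{\mathrm{TC}}$ exactly, and $\operatorname{Tr}\Pi_A^2=d_L^{-1}F(\mathcal R\circ\mathcal B)\le d_L^{-1}$ for the auxiliary map $\mathcal B$ with Kraus operators $\mathcal N(\hat P_L)^{1/2}\hat R_j^{\dagger}\hat P_L$, which is where trace preservation enters; multiplying gives $F(\mathcal R\circ\mathcal C)\le\sqrt{F^{\mathrm{TC}}}$. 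So your route is the right one, but the $\pm\tfrac14$ powers are not a cosmetic choice --- they are the entire content of "arranging" the Cauchy--Schwarz, and as written your estimate would terminate in a bound weaker than $\sqrt{F^{\mathrm{TC}}}$.
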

\begin{proof}
The second inequality holds with definition of optimal recovery. To prove the first inequality, notice that the recovery should have its domain on $\text{supp}(\mathcal{N}\hat{\rho})$ and its range on $\text{supp}(\hat{\rho})$. The optimal recovery channel $\mathcal{R}^{\text{opt}}\sim \{R_i^{\text{opt}}\}$ can be represented as 
\begin{align}
    \hat{R}_i^{\text{opt}} = \hat{P}_L \hat{B}_i^{\dagger} \mathcal{N}(\hat{P}_L)^{-1 / 2}.
\end{align}
where we can choose $\hat{B}_i^\dagger = \hat{P}_L \hat{R}_i^{\text{opt}} \mathcal{N}(\hat{P}_L)^{1/2}$. The channel fidelity can be written as
    \begin{align}
        F(\mathcal{R}^{\text{opt}}\circ\mathcal{N} \circ \mathcal{E}) &= \frac{1}{d_L^2} \sum_{i, j}\left|\operatorname{Tr}\left(\hat{R}_i^{\mathrm{opt}} \hat{N}_j\right)\right|^2 \\
        &=  \frac{1}{d_L^2} \sum_{i, j}\left|\operatorname{Tr}\left(\hat{P}_L \hat{B}_i^{\dagger} \mathcal{N}\left(\hat{P}_L\right)^{-\frac{1}{2}}\hat{N}_j\right)\right|^2 \\
        &=  \frac{1}{d_L^2} \sum_{i, j}\left|\operatorname{Tr}\left(\hat{P}_L \hat{B}_i^{\dagger} \mathcal{N}\left(\hat{P}_L\right)^{-\frac{1}{2}}\hat{N}_j\hat{P}_L\right)\right|^2 \\
    \end{align}
    By appropriate choice of Kraus operators $\hat{B}_i$ and $\hat{N}_j$, the inner sum can be diagonalized such that the trace vanishes for $i\neq j$, which is equivalent to performing singular value decomposition. Then, we have
    \begin{eqnarray}
         &&F(\mathcal{R}^{\text{opt}}\circ\mathcal{N} \circ \mathcal{E}) \\
         &=& \frac{1}{d_L^2} \sum_{i}\left|\operatorname{Tr}\left(\hat{P}_L \hat{B}_i^{\dagger} \mathcal{N}\left(\hat{P}_L\right)^{-\frac{1}{2}}\hat{N}_i\hat{P}_L\right)\right|^2 \\
        &=&  \frac{1}{d_L^2} \sum_{i}\left|\operatorname{Tr}\left((\hat{P}_L \hat{B}_i^{\dagger} \mathcal{N}\left(\hat{P}_L\right)^{-\frac{1}{4}})(\mathcal{N}\left(\hat{P}_L\right)^{-\frac{1}{4}}\hat{N}_i \hat{P}_L )\right)\right|^2 \\
        &\le& \frac{1}{d_L^2} \sum_{i, j}\left|\operatorname{Tr}\left((\hat{P}_L \hat{B}_i^{\dagger} \mathcal{N}\left(\hat{P}_L\right)^{-\frac{1}{4}})(\mathcal{N}\left(\hat{P}_L\right)^{-\frac{1}{4}}\hat{B}_i \hat{P}_L )\right)\operatorname{Tr}\left((\hat{P}_L \hat{N}_j^{\dagger} \mathcal{N}\left(\hat{P}_L\right)^{-\frac{1}{4}})(\mathcal{N}\left(\hat{P}_L\right)^{-\frac{1}{4}}\hat{N}_j \hat{P}_L )\right)\right|   \\
        &\le&  \frac{1}{d_L^2} \sqrt{\sum_i \left|\operatorname{Tr}\left((\hat{P}_L \hat{B}_i^{\dagger} \mathcal{N}\left(\hat{P}_L\right)^{-\frac{1}{4}})(\mathcal{N}\left(\hat{P}_L\right)^{-\frac{1}{4}}\hat{B}_i \hat{P}_L )\right)\right|^2 \sum_{j}\left|\operatorname{Tr}\left((\hat{P}_L \hat{N}_j^{\dagger} \mathcal{N}\left(\hat{P}_L\right)^{-\frac{1}{4}})(\mathcal{N}\left(\hat{P}_L\right)^{-\frac{1}{4}}\hat{N}_j \hat{P}_L )\right)\right|^2 } \\
        & \le& \frac{1}{d_L^2} \sqrt{\sum_{ik} \left|\operatorname{Tr}\left(\hat{P}_L \hat{B}_i^{\dagger} \mathcal{N}\left(\hat{P}_L\right)^{-\frac{1}{2}}\hat{B}_k \hat{P}_L \right)\right|^2 \sum_{jl}\left|\operatorname{Tr}\left(\hat{P}_L \hat{N}_j^{\dagger} \mathcal{N}\left(\hat{P}_L\right)^{-\frac{1}{2}}\hat{N}_l \hat{P}_L \right)\right|^2 } \label{eq:prof_neear_opti}
    \end{eqnarray}
    where the second term inside square root can be simplified as
    \begin{align}
        \frac{1}{d_L^2}\sum_{jl}\left|\operatorname{Tr}\left(\hat{P}_L \hat{N}_j^{\dagger} \mathcal{N}\left(\hat{P}_L\right)^{-\frac{1}{2}}\hat{N}_l \hat{P}_L \right)\right|^2 = F\left(\mathcal{R}^{\mathrm{TC}} \circ \mathcal{N} \circ \mathcal{E}\right).\label{eq:prof_neear_opti1}
    \end{align}
    The first term can be upper bounded as
    \begin{align}\label{eq:prof_neear_opti2}
        \frac{1}{d_L^2}\sum_{ik} \left|\operatorname{Tr}\left(\hat{P}_L \hat{B}_i^{\dagger} \mathcal{N}\left(\hat{P}_L\right)^{-\frac{1}{2}}\hat{B}_k \hat{P}_L \right)\right|^2 = \frac{1}{d_L^2}\sum_{ik} \left|\operatorname{Tr}\left(\hat{R}^{\text{opt}}_i\hat{B}_k  \right)\right|^2 = F\left(\mathcal{R}^\text{opt}\circ\mathcal{B}\right)\leq 1.
    \end{align}
    The proof of the upper bound is from the trace-preserving constraint on the channel $\mathcal{R}^\text{opt}$, such that $\mathcal{B}\sim \set{\hat{B}_i}$ acts as a CPTP map on $\hat{P}_L$. See Ref.~\cite{10.1063/1.1459754} for a more detailed proof. Substituting Eq.~\eqref{eq:prof_neear_opti1} and Eq.~\eqref{eq:prof_neear_opti2} into Eq.~\eqref{eq:prof_neear_opti}, we have $\left(F^{\mathrm{opt}}\right)^2 \leq F\left(\mathcal{R}^{\mathrm{TC}} \circ \mathcal{N} \circ \mathcal{E}\right)\leq F^{\mathrm{opt}}$. As a straightforward corollary, we have 
    \begin{align}
            \frac{1}{2}\left(1 - F(\mathcal{R}^{\text{TC}}\circ\mathcal{N} \circ \mathcal{E}) \right)\leq 1 - F^{\text{opt}} \leq 1 -  F(\mathcal{R}^{\text{TC}}\circ\mathcal{N} \circ \mathcal{E}).
    \end{align}
    and conclude the proof.
\end{proof}

\begin{figure}[t]
    \centering
    \includegraphics[width = 0.45 \textwidth]{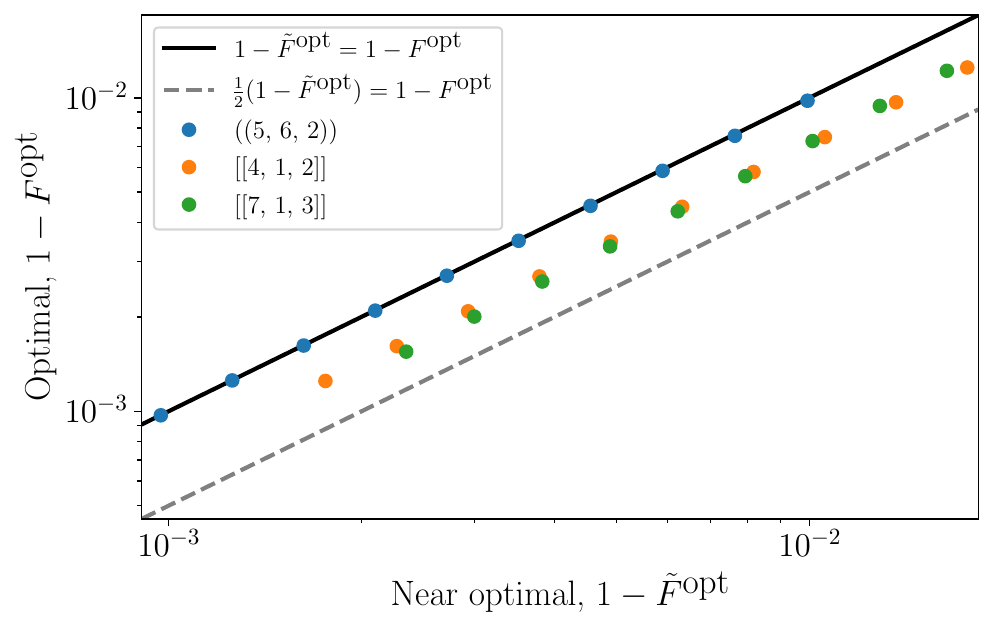}
    % \captionsetup{justification=justified}
     \caption{Comparison of near-optimal and optimal infidelity for qubit codes [[4, 1, 2]], ((5, 6, 2)), and [[7, 1, 3]] under amplitude damping noise. The black solid line (gray dashed line) represents the upper (lower) bound on the optimal infidelity given by the near-optimal infidelity}
    \label{fig:verification_qubit_supp}
\end{figure}

Since the goal is to find and benchmark good codes, we are mostly only interested in codes that perform well against the noise channels of interest. In these scenarios, the bound given by the TC channel is quite a tight bound on the optimal recovery. Therefore, the transpose channel provides a constructive channel that solves the slightly relaxed convex optimization problem. To numerically demonstrate the two-sided bound, we can look at some example qubit codes in Fig.~\ref{fig:verification_qubit_supp}. It is clear that while the two-sided bounds are narrow, they nicely bound the optimal code performances.

As a high-level understanding, the TC channel is a generalized version of the QEC recovery procedure from KL conditions: $\mathcal{A}\left(\hat{P}\right)^{-\frac{1}{2}}$ performs measurements that project the state into carefully chosen subspaces, and $\hat{P} \hat{N}_i^\dagger$ rotates them back into the corresponding logical space. More rigorously, we can derive the following lemma.

\begin{lemma}\label{lem:kraus}
    Given noise channel $\mathcal{N}\sim\set{\hat{N}_i}$ and logical codewords $\ket{\mu_L}$, the form of the transpose channel Kraus operators is equivalent to
    \begin{eqnarray}\label{eq:TC_kraus}
        \hat{R}_l^{\text{TC}} = \sum_{\mu, \nu, k}\left(M^{-\frac{1}{2}}\right)_{[\mu l], [\nu k]}|\mu_L\rangle\left\langle\nu_L\right| \hat{N}_{k}^{\dagger}
    \end{eqnarray}
    where the inverse represents matrix pseudoinverse and the QEC matrix $M_{[\mu l],[\nu k]}:=\left\langle\mu_L\left|\hat{N}_l^{\dagger} \hat{N}_k\right| \nu_L\right\rangle$.
\end{lemma}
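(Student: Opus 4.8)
The plan is to start from the definition of the transpose channel Kraus operators, $\hat{R}^{\text{TC}}_l = \hat{P}_L \hat{N}_l^\dagger \mathcal{N}(\hat{P}_L)^{-1/2}$, and rewrite the operator $\mathcal{N}(\hat{P}_L)^{-1/2}$ in a basis adapted to the error subspaces. First I would expand $\hat{P}_L = \sum_\mu |\mu_L\rangle\langle\mu_L|$ and note that $\mathcal{N}(\hat{P}_L) = \sum_{k,\nu} \hat{N}_k|\nu_L\rangle\langle\nu_L|\hat{N}_k^\dagger$; its support is spanned by the (generally non-orthonormal) vectors $|\nu;k\rangle := \hat{N}_k|\nu_L\rangle$, whose Gram matrix is exactly the QEC matrix, $\langle\mu;l|\nu;k\rangle = \langle\mu_L|\hat{N}_l^\dagger\hat{N}_k|\nu_L\rangle = M_{[\mu l],[\nu k]}$. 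The key algebraic fact is then that a positive operator of the form $A = \sum_{\alpha} |v_\alpha\rangle\langle v_\alpha|$ with Gram matrix $G_{\alpha\beta} = \langle v_\alpha|v_\beta\rangle$ satisfies, on its support, $A^{s} = \sum_{\alpha,\beta} (G^{s-1})_{\alpha\beta}\, |v_\alpha\rangle\langle v_\beta|$ for any real power $s$ (with pseudoinverse powers when $G$ is singular). Applying this with $s = -\tfrac12$ gives
\begin{eqnarray}
\mathcal{N}(\hat{P}_L)^{-1/2} = \sum_{\mu,l,\nu,k} \left(M^{-3/2}\right)_{[\mu l],[\nu k]} \hat{N}_l|\mu_L\rangle\langle\nu_L|\hat{N}_k^\dagger. \nonumber
\end{eqnarray}

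Next I would substitute this into $\hat{R}^{\text{TC}}_a = \hat{P}_L\hat{N}_a^\dagger \mathcal{N}(\hat{P}_L)^{-1/2}$ and collapse the resulting sums. Acting with $\hat{P}_L\hat{N}_a^\dagger$ on the left contracts against $\hat{N}_l|\mu_L\rangle$ to produce another factor of the QEC matrix: $\langle\rho_L|\hat{N}_a^\dagger\hat{N}_l|\mu_L\rangle = M_{[\rho a],[\mu l]}$. Summing over the intermediate indices $\mu,l$ then gives $\sum_{\mu,l} M_{[\rho a],[\mu l]}(M^{-3/2})_{[\mu l],[\nu k]} = (M\cdot M^{-3/2})_{[\rho a],[\nu k]} = (M^{-1/2})_{[\rho a],[\nu k]}$, where I must be a little careful that $M\cdot M^{-3/2} = M^{-1/2}$ holds on the support of $M$ (it does, since $M^{-3/2}$ is by convention supported there, and $M$ acts as the identity on its own support composed with the pseudoinverse). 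This yields exactly
\begin{eqnarray}
\hat{R}^{\text{TC}}_a = \sum_{\rho,\nu,k}\left(M^{-1/2}\right)_{[\rho a],[\nu k]} |\rho_L\rangle\langle\nu_L|\hat{N}_k^\dagger, \nonumber
\end{eqnarray}
which matches Eq.~\eqref{eq:TC_kraus} after relabeling $a \to l$, $\rho \to \mu$.

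I would then observe that the relabeling of Kraus index from $a$ to $l$ is purely cosmetic (the Kraus set is the same set of operators), so the displayed expression \emph{is} a valid Kraus representation of $\mathcal{R}^{\text{TC}}$; alternatively one can just present it as ``the transpose channel Kraus operators are equivalent to'' the stated form, which is how the lemma is phrased. The main obstacle, and the step that needs the most care, is the handling of the pseudoinverse when $M$ is singular (the degenerate QEC matrix case flagged in the main text): there one should either restrict all matrix powers of $M$ to its support from the outset and verify the support of $\mathcal{N}(\hat{P}_L)$ coincides with $\operatorname{span}\{\hat{N}_k|\nu_L\rangle\}$, or, cleaner, first prove the identity for invertible $M$ and then extend by a continuity/limiting argument (perturb $M \to M + \epsilon \Pi$ with $\Pi$ the projector onto $\operatorname{supp} M$, and take $\epsilon\to 0^+$). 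The nondegenerate case is a routine Gram-matrix computation; everything beyond that is bookkeeping of indices.
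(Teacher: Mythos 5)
Your proposal is correct and follows essentially the same route as the paper's proof: you expand $\mathcal{N}(\hat{P}_L)^{-1/2}$ in the error-subspace basis $\{\hat{N}_k|\nu_L\rangle\}$ with coefficient matrix $M^{-3/2}$ and then contract $\hat{P}_L\hat{N}_l^\dagger$ against it using $M\cdot M^{-3/2}=M^{-1/2}$ on the support. The only cosmetic difference is that you invoke the general Gram-matrix power identity $A^{s}=\sum_{\alpha\beta}(G^{s-1})_{\alpha\beta}|v_\alpha\rangle\langle v_\beta|$, whereas the paper verifies the $s=-\tfrac12$ instance directly through the defining relation $\mathcal{N}(\hat{P}_L)^{-1/2}\mathcal{N}(\hat{P}_L)\mathcal{N}(\hat{P}_L)^{-1/2}=\hat{P}_N$.
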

\begin{proof}
    The error subspace basis is defined by the set $\set{\hat{N}_l\ket{\mu_L}}$. It is worth noting that this basis set is neither necessarily linear independent nor necessarily complete on the Hilbert space. However, it provides us the sufficient basis to express $\mathcal{N}\left(\hat{P}_L\right)^{-\frac{1}{2}}$ and consequently $\hat{R}_l^{\text{TC}}$. To start with, we can write a projection operator on this subspace
    \begin{eqnarray}
        \hat{P}_{N} = \sum_{l, k, \mu, \nu}\left(M^{-1}\right)_{[\mu l], [\nu k]} \hat{N}_l|\mu\rangle\left\langle\nu\right| \hat{N}_k^{\dagger},
    \end{eqnarray}
    where $M^{-1}$ is the pseudoinverse of $M$. The projector expression can be verified through checking its effect in the error subspace $\hat{P}_{N}\hat{N}_k| \nu_L\rangle = \hat{N}_k| \nu_L\rangle$.

    Notice that $\mathcal{N}\left(\hat{P}_L\right) = \sum_{l,\mu} \hat{N}_l \ket{\mu_L}\bra{\mu_L} \hat{N}_l^\dagger$. We can then derive representation of $\mathcal{N}\left(\hat{P}_L\right)^{-\frac{1}{2}}$ in the error subspace basis as
    \begin{eqnarray}
    \mathcal{N}\left(\hat{P}_L\right)^{-\frac{1}{2}}=\sum_{l, k, \mu, \nu}\left(M^{-\frac{3}{2}}\right)_{[\mu l], [\nu k]} \hat{N}_l|\mu_L\rangle\left\langle\nu_L\right| \hat{N}_{k}^{\dagger}.\label{eq:expr_inv_sqrt}
    \end{eqnarray}
    We can either derive step-by-step or straightforwardly verify from the definition of $\mathcal{N}\left(\hat{P}_L\right)^{-\frac{1}{2}}$:
    \begin{align}
        \mathcal{N}\left(\hat{P}_L\right)^{-\frac{1}{2}} \mathcal{N}\left(\hat{P}_L\right) \mathcal{N}\left(\hat{P}_L\right)^{-\frac{1}{2}} = \hat{P}_{N}. \label{eq:def_inv_sqrt}
    \end{align}
    With the expression presented in Eq.~\eqref{eq:expr_inv_sqrt}, the transpose channel Kraus can be written in the form of
    \begin{eqnarray}
        \hat{R}^{\text{TC}}_l &:=& \hat{P}_L \hat{N}_l^\dagger \mathcal{N}\left(\hat{P}_L\right)^{-\frac{1}{2}}\\
        &=& \sum_\mu \ket{\mu_L}\bra{\mu_L} \hat{N}_l^\dagger \sum_{p, k, \eta, \nu}\left(M^{-\frac{3}{2}}\right)_{[\eta p], [\nu k]} \hat{N}_p|\eta_L\rangle\left\langle\nu_L\right| \hat{N}_{k}^{\dagger}\\
        &=&  \sum_\mu \sum_{p, k, \eta, \nu}  \left(M^{-\frac{3}{2}}\right)_{[\eta p], [\nu k]} \bra{\mu_L} \hat{N}_l^\dagger\hat{N}_p|\eta_L\rangle \ket{\mu_L}\left\langle\nu_L\right| \hat{N}_{k}^{\dagger} \\
        &=&  \sum_\mu \sum_{p, k, \eta, \nu}  \left(M^{-\frac{3}{2}}\right)_{[\eta p], [\nu k]} M_{[\mu l], [\eta p]} \ket{\mu_L}\left\langle\nu_L\right| \hat{N}_{k}^{\dagger} \\
        &=& \sum_{\mu, \nu, k}\left(M^{-\frac{1}{2}}\right)_{[\mu l], [\nu k]}|\mu_L\rangle\left\langle\nu_L\right| \hat{N}_{k}^{\dagger}.
    \end{eqnarray}
\end{proof}

Although we assume isometry encoding for simplicity, it is straightforward to extend our result to the context of general channel reversal problems. For example, for mixed-state encodings, we can combine the encoding channel with the noise channel and revert the composed channel. For other codes like subsystem codes, the transpose channel is also valid following, for example, Ref.~\cite{PhysRevA.86.012335}.

While Eq.~\eqref{eq:TC_kraus} is useful for our proof of the near-optimal fidelity expression, it is not so intuitive. We can arrive at a more intuitive expression by defining a set of new basis 
\begin{eqnarray}
    \ket{\psi_{i}} := \sum_{[\nu k]}\left(M^{-\frac{1}{2}}\right)^\ast_{i, [\nu k]} \hat{N}_k\ket{\nu_L},
\end{eqnarray}
which is not necessarily a set of orthonormal basis. In fact, since $M$ is a Hermitian matrix, we can check that $\langle \phi_j\vert \phi_i\rangle = M^{-\frac{1}{2}}_{j, [\mu l]}\left(M^{-\frac{1}{2}}\right)^\ast_{i, [\nu k]} \bra{\mu_{L}}\hat{N}_l^\dagger \hat{N}_k\ket{\nu_{L}} = \left(M^{-1/2}M M^{-1/2}\right)_{j,i}$, which only guarantees orthonormality of the basis if $M$ is full rank. The transpose channel Kraus operators can be conveniently expressed as 
\begin{eqnarray}
    \hat{R}_l^{\text{TC}} = \sum_\mu \ket{\mu_L}\langle\psi_{[\mu l]}\vert.
\end{eqnarray}
This form of the transpose channel Kraus operators can be interpreted as following: for each logical codeword $\ket{\mu_L}$ and noise Kraus operator with label $l$, the physical operation of the transpose channel is to measure the projector $\vert \psi_{[\mu l]} \rangle \langle \psi_{[\mu l]}\vert$ and perform a rotation back to the logical codeword $\ket{\mu_L}$ accordingly.

\section{Derivation of the near-optimal fidelity expression}

With Lemma.~\ref{lem:kraus}, we arrive at our main result, Theorem~\ref{theo:Fe} in the main text.

\begin{theorem*}
    For a $d_L$-dimensional encoding, $\mathcal{E}$, and noise channel $\mathcal{N}$, the near-optimal fidelity is defined as
    \begin{align}\label{supp_eq:TC_main_result}
        \tilde{F}^{\text{opt}}=\frac{1}{d_L^2}\left\|\operatorname{Tr}_L \sqrt{M}\right\|_F^2,
    \end{align}
    where $M$ is the QEC matrix, $\left(\operatorname{Tr}_L B\right)_{l,k} = \sum_\mu B_{[\mu l], [\mu k]}$ denotes the partial trace over the code space indices, and $||\cdot||_F$ is the Frobenius norm. The near-optimal fidelity gives a two-sided bound on the optimal fidelity as
    \begin{eqnarray}
    \frac{1}{2}\left(1 - \tilde{F}^{\text{opt}} \right)\leq 1 - F^{\text{opt}} \leq 1 - \tilde{F}^{\text{opt}}.
    \end{eqnarray}
\end{theorem*}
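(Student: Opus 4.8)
\textit{Proof proposal.}---The plan is to evaluate the channel fidelity of the transpose channel in closed form using Lemma~\ref{lem:kraus}, and then read off the two-sided bound directly from the near-optimality lemma. Since the encoding $\mathcal{E}$ is an isometry onto the code space, the composed channel $\mathcal{R}^{\text{TC}}\circ\mathcal{N}\circ\mathcal{E}$ has $d_L$-dimensional input and output, and because the channel fidelity $F(\mathcal{Q}) = \frac{1}{d_L^2}\sum_i |\operatorname{Tr}\hat{Q}_i|^2$ is independent of the Kraus representation, it suffices to work with the family $\{\hat{R}_l^{\text{TC}}\hat{N}_j\}$ (restricted to the code space) obtained from the noise Kraus set $\{\hat{N}_j\}$ and the transpose-channel Kraus set of Lemma~\ref{lem:kraus}.

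First I would substitute $\hat{R}_l^{\text{TC}} = \sum_{\mu\nu k}(M^{-1/2})_{[\mu l],[\nu k]}\,|\mu_L\rangle\langle\nu_L|\,\hat{N}_k^\dagger$ and compute the logical trace $\sum_\sigma \langle\sigma_L|\hat{R}_l^{\text{TC}}\hat{N}_j|\sigma_L\rangle$. Using $\langle\nu_L|\hat{N}_k^\dagger\hat{N}_j|\sigma_L\rangle = M_{[\nu k],[\sigma j]}$, the sum over $\nu,k$ collapses the expression to $\sum_\sigma (M^{-1/2}M)_{[\sigma l],[\sigma j]}$. The key algebraic fact is that $M^{-1/2}M = \sqrt{M}$ holds exactly, even when the inverse is understood as a pseudoinverse: $M^{-1/2}$ is the pseudoinverse of $\sqrt{M}$, and $\sqrt{M}$ shares the support of $M$, so $M^{-1/2}M = M^{-1/2}\sqrt{M}\sqrt{M} = \Pi_{\operatorname{supp}M}\sqrt{M} = \sqrt{M}$. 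Hence the trace equals $(\operatorname{Tr}_L\sqrt{M})_{l,j}$, and summing $|\cdot|^2$ over the index pair $(l,j)$ gives $F(\mathcal{R}^{\text{TC}}\circ\mathcal{N}\circ\mathcal{E}) = \frac{1}{d_L^2}\|\operatorname{Tr}_L\sqrt{M}\|_F^2 = \tilde{F}^{\text{opt}}$, which is Eq.~\eqref{eq:near_opt_F}.

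The two-sided bound Eq.~\eqref{eq:two_sided_bound} is then immediate: the near-optimality lemma already establishes $\frac{1}{2}(1 - F(\mathcal{R}^{\text{TC}}\circ\mathcal{N}\circ\mathcal{E})) \le 1 - F^{\text{opt}} \le 1 - F(\mathcal{R}^{\text{TC}}\circ\mathcal{N}\circ\mathcal{E})$, and we have just identified $F(\mathcal{R}^{\text{TC}}\circ\mathcal{N}\circ\mathcal{E})$ with $\tilde{F}^{\text{opt}}$.

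I expect the only real obstacle to be careful bookkeeping in the degenerate case. When $M$ is not full rank, the error-subspace vectors $\hat{N}_k|\nu_L\rangle$ that lie outside $\operatorname{supp}(M)$ must be shown to drop out consistently in $\mathcal{N}(\hat{P}_L)^{-1/2}$ and hence in $\hat{R}_l^{\text{TC}}$; the identity $M^{-1/2}M = \sqrt{M}$ (rather than $\sqrt{M}$ pre-composed with a projector that could clip a contribution) is exactly the statement that needs to be verified, and Lemma~\ref{lem:kraus} together with Eq.~\eqref{eq:def_inv_sqrt} already packages the required support properties. A secondary, routine point is the reduction from $\mathcal{R}^{\text{TC}}\circ\mathcal{N}\circ\mathcal{E}$ to the code-space restriction, which uses only that $\mathcal{E}$ is an isometry and the representation-independence of the channel fidelity.
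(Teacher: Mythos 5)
Your proposal is correct and follows essentially the same route as the paper's own proof: substitute the transpose-channel Kraus form of Lemma~\ref{lem:kraus} into the channel-fidelity formula, collapse $\sum_{\nu,k}(M^{-1/2})_{[\mu l],[\nu k]}M_{[\nu k],[\sigma j]}$ to $(\sqrt{M})_{[\mu l],[\sigma j]}$, and invoke the near-optimality lemma for the two-sided bound. Your explicit justification that $M^{-1/2}M=\sqrt{M}$ survives the pseudoinverse in the rank-deficient case is a welcome clarification of a step the paper leaves implicit, but it does not change the argument.
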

\begin{proof}
    The near-optimal fidelity is defined as the fidelity achieved by the transpose channel recovery, $\tilde{F}^{\text{opt}}:= F\left(\mathcal{R}^{\text{TC}}\circ \mathcal{N}\circ \mathcal{E}\right)$. Substituting Lemma \ref{lem:kraus} into Eq.~\eqref{eq:ent_fid_expr}, we have
\begin{eqnarray}
    \tilde{F}^{\text{opt}} &=& \frac{1}{d_L^2} \sum_{i, j}\left|\operatorname{Tr}\left(\hat{R}^{\text{TC}}_i \hat{N}_j \right)\right|^2 \\
    &=& \frac{1}{d_L^2} \sum_{i, j}\left|\operatorname{Tr}\left(\sum_{\mu, \nu, k}\left(M^{-\frac{1}{2}}\right)_{[\mu i], [\nu k]}|\mu_L\rangle\left\langle\nu_L\right| \hat{N}_{k}^{\dagger}\hat{N}_j \right)\right|^2\\
    &=& \frac{1}{d_L^2} \sum_{i, j}\left|\operatorname{Tr}\left(\sum_{\mu, \nu, k}\left(M^{-\frac{1}{2}}\right)_{[\mu i], [\nu k]}\left\langle\nu_L\right| \hat{N}_{k}^{\dagger}\hat{N}_j |\mu_L\rangle \right)\right|^2\\
    &=& \frac{1}{d_L^2} \sum_{i, j}\left|\operatorname{Tr}\left(\sum_{\mu, \nu, k}\left(M^{-\frac{1}{2}}\right)_{[\mu i], [\nu k]}M_{[\nu k], [\mu j]} \right)\right|^2\\
    &=& \frac{1}{d_L^2} \sum_{i, j}\left|\sum_{\mu}\left(M^{\frac{1}{2}}\right)_{[\mu i], [\mu j]}\right|^2\\
    &=& \frac{1}{d_L^2} \left\| \text{Tr}_{L} \sqrt{M}\right\|_F^2
\end{eqnarray}
where we have used the cyclicity of trace and the definition of the Frobenius norm $||B||_F:= \sum_{i,j} \abs{B_{i,j}}^2$

\end{proof}
In some cases, we might also be dealing with non-orthonormal codewords. As an example, finite-energy GKP codes generally have non-orthonormal codewords. When we want to obtain a valid near-optimal fidelity for the orthonormalized codewords, it is convenient to have the following corollary.

\begin{corollary}
    For any non-orthonormal codeword with overlap matrix
    \begin{align}
        m_{\mu,\nu} := \bra{\mu_L}\ket{\nu_L},
    \end{align}
    the near-optimal fidelity in Theorem~\ref{theo:Fe} has the form of
    \begin{align}
        \tilde{F}^{\text{opt}}=\frac{1}{d_L^2}\left\|\operatorname{Tr}_L \sqrt{(m^{-1}\otimes I_{l})M}\right\|_F^2, \label{eq:Fe_nonorthonormal}
    \end{align}
    where $I_l$ is a $l\times l$ identity matrix, and $l$ is the number of Kraus operators of the noise channel $\mathcal{N}$.
\end{corollary}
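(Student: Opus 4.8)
The idea is to reduce the statement to Theorem~\ref{theo:Fe}: from the non-orthonormal codewords $\ket{\mu_L}$ I would build an orthonormal codeword basis by L\"owdin orthonormalization, track how the QEC matrix transforms under this change of basis, apply Theorem~\ref{theo:Fe} to the orthonormal code, and finally rewrite the answer in the claimed form using a similarity transformation that is absorbed by the partial trace $\operatorname{Tr}_L$. Since the codewords span a $d_L$-dimensional space they are linearly independent, so $m$ is Hermitian positive definite and $m^{-1/2}$ exists and is Hermitian; set $\ket{\tilde\mu_L} := \sum_\beta (m^{-1/2})_{\beta\mu}\ket{\beta_L}$. Using Hermiticity of $m^{-1/2}$ one checks $\langle\tilde\mu_L\vert\tilde\nu_L\rangle = (m^{-1/2}\,m\,m^{-1/2})_{\mu\nu} = \delta_{\mu\nu}$, so $\{\ket{\tilde\mu_L}\}$ is an orthonormal basis of the same code space. (Any other orthonormalization differs from this one by a logical unitary, under which the channel fidelity $F$ is invariant, so the choice is immaterial.)

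\textbf{Transformed QEC matrix and application of Theorem~\ref{theo:Fe}.} Let $A := m^{-1/2}\otimes I_l$, i.e. the operator acting as $m^{-1/2}$ on the code index and trivially on the $l$ Kraus indices. Then the QEC matrix of the orthonormalized code is $\tilde M_{[\mu l],[\nu k]} = \langle\tilde\mu_L\vert\hat N_l^\dagger\hat N_k\vert\tilde\nu_L\rangle = (A\,M\,A)_{[\mu l],[\nu k]}$, that is $\tilde M = AMA$. Since $A=A^\dagger$ and $M\succeq 0$ (it is a Gram matrix), $\tilde M\succeq 0$ and $\sqrt{\tilde M}$ is well defined, so Theorem~\ref{theo:Fe} applied to $\{\ket{\tilde\mu_L}\}$ gives $\tilde F^{\text{opt}} = \frac{1}{d_L^2}\lVert\operatorname{Tr}_L\sqrt{\tilde M}\rVert_F^2$.

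\textbf{Rewriting via similarity.} Observe that $(m^{-1}\otimes I_l)M = A^2 M = A(AMA)A^{-1} = A\,\tilde M\,A^{-1}$ with $A^{-1}=m^{1/2}\otimes I_l$, so $(m^{-1}\otimes I_l)M$ is similar to $\tilde M$, hence diagonalizable with nonnegative spectrum, and its square root with nonnegative spectrum is $\sqrt{(m^{-1}\otimes I_l)M} = A\sqrt{\tilde M}\,A^{-1}$. It then remains to show $\operatorname{Tr}_L\bigl(A\sqrt{\tilde M}A^{-1}\bigr) = \operatorname{Tr}_L\sqrt{\tilde M}$. Expanding in indices,
\begin{align}
\bigl(\operatorname{Tr}_L A\sqrt{\tilde M}A^{-1}\bigr)_{lk} &= \sum_{\mu,\alpha,\beta}(m^{-1/2})_{\mu\alpha}\,(\sqrt{\tilde M})_{[\alpha l],[\beta k]}\,(m^{1/2})_{\beta\mu} \nonumber\\
&= \sum_{\alpha,\beta}(m^{1/2}m^{-1/2})_{\beta\alpha}\,(\sqrt{\tilde M})_{[\alpha l],[\beta k]} = \bigl(\operatorname{Tr}_L\sqrt{\tilde M}\bigr)_{lk},
\end{align}
where I used $m^{1/2}m^{-1/2}=I$. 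Combining with the previous step yields $\tilde F^{\text{opt}} = \frac{1}{d_L^2}\lVert\operatorname{Tr}_L\sqrt{(m^{-1}\otimes I_l)M}\rVert_F^2$, as claimed.

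\textbf{Main obstacle.} The computation is mostly bookkeeping, and the one point deserving care is the last display: the partial trace does \emph{not} literally commute with conjugation by $A$, but because $A=m^{-1/2}\otimes I_l$ acts as the identity on precisely the subsystem (the Kraus indices $l,k$) that is \emph{not} traced out, the $m^{-1/2}$ and $m^{1/2}$ factors meet under the sum over the traced code index and cancel. A secondary subtlety is that $(m^{-1}\otimes I_l)M$ need not be Hermitian, so the square root in the statement must be understood as the square root with nonnegative eigenvalues — which is exactly $A\sqrt{\tilde M}A^{-1}$ — and this is what makes the manipulation legitimate.
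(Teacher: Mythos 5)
Your proof is correct and follows essentially the same route as the paper: orthonormalize the codewords with an operator of the form $B\otimes I_l$ satisfying $AA^\dagger = m^{-1}\otimes I_l$ (you simply fix the Hermitian choice $A=m^{-1/2}\otimes I_l$), apply Theorem~\ref{theo:Fe} to the transformed QEC matrix, and absorb the conjugation using the identity $\bigl(A\sqrt{\tilde M}A^{-1}\bigr)^2=A\tilde M A^{-1}$ together with cyclicity of the partial trace over the code index. Your explicit remarks on the non-Hermiticity of $(m^{-1}\otimes I_l)M$ and on invariance under the choice of orthonormalization are sound and only make explicit what the paper leaves implicit.
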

\begin{proof}
Assume that with the, not necessarily orthonormal, original codewords $\ket{\mu_L}$, we obtain the QEC matrix $M$. Moreover, we can always find a matrix $A$, such that $A A^\dagger = m^{-1}\otimes I_l$. Consequently, $A^\dagger (m\otimes I_{l}) A = I$ and  orthonormalizes the codewords. The QEC matrix obtained from the orthonormalized codewords, $M^\prime$, is related to the original QEC matrix via $M^\prime = A^\dagger M A$. Therefore, based on Theorem \ref{theo:Fe}, we have
    \begin{align}
\tilde{F}^{\text{opt}}& =\frac{1}{d_L^2}\left\|\operatorname{Tr}_L \sqrt{M^\prime}\right\|_F^2. \label{eq:prof_col2_1}
\end{align}
Because of the tensor product structure of $m\otimes I_l$, we can write $A = B\otimes I_l$ and still utilize the cyclicity of the partial trace:
\begin{align}
    \operatorname{Tr}_L \sqrt{M^\prime} = \operatorname{Tr}_L A\sqrt{M^\prime}A^{-1} = \operatorname{Tr}_L \sqrt{AM^\prime A^{-1}} = \operatorname{Tr}_L \sqrt{AA^{\dagger}M} = \operatorname{Tr}_L \sqrt{(m^{-1}\otimes I_{l})M},\label{eq:prof_col2_2}
\end{align}
where the second equality is a consequence of the identity $\left(A\sqrt{M^\prime}A^{-1}\right)^2 = AM^\prime A^{-1}$. Substituting Eq.~\eqref{eq:prof_col2_2} into Eq.~\eqref{eq:prof_col2_1}, we conclude the proof.

\end{proof}

\section{Derivation of the perturbative form of the near-optimal fidelity and its properties}
Here we attempt to prove Corollary.~\ref{coro:Fe_pert}, which is a perturbative expansion of Theorem~\ref{theo:Fe}. Such a perturbative expansion can lead to fruitful analytical results, as we have demonstrated for the thermodynamic code and the GKP code in the main text.

\begin{corollary*}
     With a suitable choice of unitary gauge, there exists a decomposition of the QEC matrix $M$ such that $D = \frac{1}{d_L}\operatorname{Tr}_L M$ is diagonal. With the residual matrix $\Delta M := M - I_L \otimes D$, the near-optimal infidelity has a perturbative expansion through
    \begin{eqnarray}
    1 - \tilde{F}^{\text{opt}} =  \frac{1}{d_L}\norm{f(D)\odot \Delta M }_F^2 + \mathcal{O}\left(\frac{1}{d_L}\norm{f(D)\odot \Delta M }_F^3\right)
    \end{eqnarray}
    where we define $f(D) _{[\mu l],[\nu k]} = \frac{1}{\sqrt{D_{ll}} + \sqrt{D_{kk}}}$ and the Hadamard product to be $\left(A\odot B\right)_{ij} = A_{ij}B_{ij}$.
\end{corollary*}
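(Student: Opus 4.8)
The plan is to expand Theorem~\ref{theo:Fe}, $\tilde F^{\mathrm{opt}}=\tfrac{1}{d_L^2}\norm{\operatorname{Tr}_L\sqrt M}_F^2$, to second order in $\Delta M$ about $M_0:=I_L\otimes D$, and to exploit that the first-order term of $\sqrt M$ drops out of $\operatorname{Tr}_L\sqrt M$. First I would fix the gauge: a unitary remixing $\hat N_i\mapsto\sum_jU_{ij}\hat N_j$ conjugates $M$ by $I_L\otimes\bar U$ on the Kraus-index factor, hence conjugates $\operatorname{Tr}_LM$ by $\bar U$ while leaving $\tilde F^{\mathrm{opt}}$ unchanged, since conjugation on the Kraus-index factor commutes with $\sqrt{\,\cdot\,}$ and $\operatorname{Tr}_L$ and the Frobenius norm is unitarily invariant. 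Choosing $\bar U$ to diagonalize the Hermitian positive matrix $\operatorname{Tr}_LM$ makes $D=\tfrac1{d_L}\operatorname{Tr}_LM$ diagonal. One may assume $D_{ll}>0$ for all $l$: if $D_{ll}=0$, then $M\succeq0$ forces every diagonal entry $M_{[\mu l],[\mu l]}$ to vanish and hence the whole $l$-block of $M$, so that Kraus operator is redundant.

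Next I would apply the Daleckii--Krein formula for $\sqrt{\,\cdot\,}$. In the eigenbasis of $M_0$ --- the given basis, on which the eigenvalue attached to index $[\mu l]$ is $D_{ll}$, independent of $\mu$ --- the first Fréchet derivative is the Hadamard product with the divided-difference matrix, whose $([\mu l],[\nu k])$ entry is $(\sqrt{D_{ll}}-\sqrt{D_{kk}})/(D_{ll}-D_{kk})=1/(\sqrt{D_{ll}}+\sqrt{D_{kk}})=f(D)_{[\mu l],[\nu k]}$, the degenerate case reproducing the derivative $1/(2\sqrt{D_{ll}})$ consistently. Thus $\sqrt M=I_L\otimes\sqrt D+S_1+S_2+\mathcal O(\norm{\Delta M}^3)$ with $S_1=f(D)\odot\Delta M$. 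Because $f(D)_{[\mu l],[\mu k]}$ is independent of $\mu$, one has $(\operatorname{Tr}_LS_1)_{lk}=(\sqrt{D_{ll}}+\sqrt{D_{kk}})^{-1}(\operatorname{Tr}_L\Delta M)_{lk}$, and $\operatorname{Tr}_L\Delta M=\operatorname{Tr}_LM-d_LD=0$ by the very definition of $D$, so the first-order term disappears from $\operatorname{Tr}_L\sqrt M$. Using $\operatorname{Tr}_L(I_L\otimes\sqrt D)=d_L\sqrt D$ and $\norm{\sqrt D}_F^2=\operatorname{Tr}D=\tfrac1{d_L}\operatorname{Tr}M=1$ (trace preservation of $\mathcal N$ together with orthonormal codewords), expanding the Frobenius norm yields
\[
\tilde F^{\mathrm{opt}}=1+\frac{2}{d_L}\operatorname{Tr}\!\big[\sqrt D\,\operatorname{Tr}_L S_2\big]+\mathcal O\!\big(\tfrac1{d_L}\norm{f(D)\odot\Delta M}_F^3\big).
\]

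To evaluate the remaining term without computing $S_2$ in full, I would square the expansion: $(\sqrt M)^2=M$ at second order reads $M_0S_2+S_2M_0=-S_1^2$, so $S_2=-f(D)\odot S_1^2$ in the eigenbasis, with diagonal $(S_2)_{[\mu l],[\mu l]}=-\tfrac1{2\sqrt{D_{ll}}}\sum_{[\nu k]}\abs{(S_1)_{[\mu l],[\nu k]}}^2$ (using $S_1=S_1^\dagger$). Hence $\operatorname{Tr}[\sqrt D\,\operatorname{Tr}_L S_2]=-\tfrac12\norm{S_1}_F^2$, which gives $1-\tilde F^{\mathrm{opt}}=\tfrac1{d_L}\norm{f(D)\odot\Delta M}_F^2+\mathcal O(\cdots)$. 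For the remainder, $D\succeq0$ and $\operatorname{Tr}D=1$ give $D_{ll}\le1$, so $f(D)\ge\tfrac12$ entrywise and $\norm{\Delta M}_F\le2\norm{f(D)\odot\Delta M}_F$; the discarded contributions are cubic and higher in $\Delta M$ (the leading one $\tfrac2{d_L}\operatorname{Tr}[\sqrt D\,\operatorname{Tr}_L S_3]$, controlled by standard Fréchet-derivative bounds on $\sqrt{\,\cdot\,}$ at $M_0$), hence of the stated order.

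The step I expect to be the main obstacle is handling the second-order perturbation of the matrix square root cleanly: $S_2$ is a priori a nested divided-difference object, and one is tempted to grind it out. The resolution is that \emph{only} the single scalar $\operatorname{Tr}[\sqrt D\,\operatorname{Tr}_L S_2]$ is needed, and it collapses to $-\tfrac12\norm{S_1}_F^2$ through the algebraic identity $(\sqrt M)^2=M$; combined with the vanishing of the $\operatorname{Tr}_L$-trace of the first-order term (from $\operatorname{Tr}_L\Delta M=0$), what is left is routine --- justifying the validity of the Daleckii--Krein expansion (legitimate since $M$ stays positive for small $\Delta M$ when $D_{ll}>0$) and controlling the cubic remainder uniformly.
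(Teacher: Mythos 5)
Your proposal is correct and follows essentially the same route as the paper: a Daleckii--Krein expansion of $\sqrt{M}$ about $I_L\otimes D$, the first-order term dropping out of $\operatorname{Tr}_L\sqrt{M}$ because $\operatorname{Tr}_L\Delta M=0$, and the quadratic term yielding $\frac{1}{d_L}\norm{f(D)\odot \Delta M}_F^2$ via $\operatorname{Tr}D=1$ --- your only variation is obtaining the second-order piece by matching orders in $(\sqrt{M})^2=M$ instead of quoting the explicit second-order Daleckii--Krein term $-f(D)\odot\left(f(D)\odot\Delta M\right)^2$, which is the same object. One slip to fix: the second-order matching condition should be $\left(I_L\otimes\sqrt{D}\right)S_2+S_2\left(I_L\otimes\sqrt{D}\right)=-S_1^2$, i.e.\ the anticommutator with $\sqrt{M_0}$ rather than with $M_0$ as written; your subsequent formulas ($S_2=-f(D)\odot S_1^2$ and its diagonal entries $-\tfrac{1}{2\sqrt{D_{ll}}}(S_1^2)_{[\mu l],[\mu l]}$) already presuppose the corrected equation, so the final result is unaffected.
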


\begin{proof}
Given the decomposition $M = I_L\otimes D + \Delta M$, where matrix $D$ is diagonal and $D\succ 0$ (since we can truncate the $M$ matrix and discard the channel subspaces which vanish,  $D_{ii}=0$). If we treat $\Delta M$ as a perturbation, we can perturbatively expand the matrix square root such that 
\begin{eqnarray}
    \sqrt{M} &=& I_L \otimes \sqrt{D} + f(D)\odot \Delta M - f(D)\odot \left(f(D)\odot \Delta M\right)^2 + \delta,\\
    ||\delta||_F &\le& \mathcal{O}(\norm{f(D)\odot \Delta M }_F^3),
\end{eqnarray}
which is a second-order expansion from the Daleckii-Krein theorem~\cite{DaletskiiKrein1965, carlsson2018perturbation}. Notice that $\text{Tr}D = 1$ and $\text{Tr}_L\Delta M = 0$, then we have
\begin{align}
    \text{Tr}_L\sqrt{M}=d_L \sqrt{D}-f(D) \odot \text{Tr}_L \left((f(D) \odot \Delta M)^2\right)+\text{Tr}_L \delta.
\end{align}
Therefore, an expansion of the Frobenius norm lead to the final expression
\begin{align}
      1-\tilde{F}^{\text{opt}} &= 1-\frac{1}{d_L^2}\left\|\operatorname{Tr}_L \sqrt{M}\right\|_F^2\\
      &= 1-\frac{1}{d_L^2} \operatorname{Tr} (\operatorname{Tr}_L \sqrt{M})^2 \\
    & = 1-\text{Tr} D + \Tr\left(\frac{1}{d_L}\Tr_L \left(f(D)\odot \Delta M\right)^2\right)+ \mathcal{O}\left(\frac{1}{d_L}\norm{f(D)\odot \Delta M }_F^3\right)\\
    & = \frac{1}{d_L}\norm{f(D)\odot \Delta M}_F^2+ \mathcal{O}\left(\frac{1}{d_L}\norm{f(D)\odot \Delta M }_F^3\right).
\end{align}

\end{proof}

As mentioned in the main text, we can choose to determine $D$ not through performing a unitary rotation but instead truncating the off-diagonal terms in $\frac{1}{d_L}\operatorname{Tr}_L M$. Such a choice can be more convenient when the analytical expression for the unitary rotation is unknown. In this scenario, we end up with $\text{Tr}\Delta M \neq 0$ and $\Tr D\neq 1$. Physically, we are distributing some of the correctable parts into the uncorrectable off-diagonal part of the QEC matrix, thus overestimating the near-optimal channel infidelity. The perturbative form is modified to be
\begin{align}
     1-\tilde{F}^{\text{opt}} =\frac{1}{d_L}\norm{f(D)\odot \Delta M}_F^2 - \frac{1}{d_L^2}\Tr\left(\Tr_L f(D)\odot \Delta M\right)^2+ \mathcal{O}\left(\frac{1}{d_L}\norm{f(D)\odot \Delta M }_F^3\right).
\end{align}
Therefore, with such a choice of $D$, the resulting infidelity is a good approximation as long as $\norm{f(D)\odot \left(\frac{1}{d_L} \operatorname{Tr}_L M - D\right)}_F^2\ll \frac{1}{d_L}\norm{ f(D)\odot \Delta M}_F^2$. Moreover, if higher precision is required, it is certainly possible to have higher orders of perturbative expressions through the definition of Fretchet derivatives or through solving the iterative relations order-by-order.

For a more intuitive understanding of the perturbative form, we can rewrite it as
\begin{eqnarray}
 1 - F^{\text{opt}} = \frac{1}{d_L} \left(\sum_{[\mu l]\neq [\nu k]} \frac{p_{[\mu l]} p_{[\nu k]}}{\left(\sqrt{\bar{p}_{l}} + \sqrt{\bar{p}_{k}}\right)^2}\Tr{\hat{P}_{[\mu l]}\hat{P}_{[\nu k]}}  + \sum_{[\mu l]}\frac{\left(p_{[\mu l]} - \bar{p}_{l}\right)^2}{4\bar{p}_{l}}\right) + \mathcal{O}\left(\frac{1}{d_L}\norm{f(D)\odot \Delta M }_F^3\right).
\end{eqnarray}
Here, we define the error subspace projectors as $\hat{P}_{[\mu l]} := \frac{\hat{N}_l\ket{\mu_L}\bra{\mu_L}\hat{N}_l^\dagger}{p_{[\mu l]}}$, where $p_{[\mu l]} := \bra{\mu_{L}}\hat{N}_l^\dagger\hat{N}_l\ket{\mu_{L}}$ and represents the probability of measuring $\ket{\mu, l}$ if we initialize in $\ket{\mu_{L}}$. The probability of measuring noise Kraus operator $l$ is $\bar{p}_{l} = \frac{1}{d_L}\sum_{\mu=0}^{d_L-1}p_{[\mu l]}$. In the expression, the second term contains the errors from different logical codewords reacting differently to the same noise Kraus operator, which leaks information to the environment. The first term includes logical flip errors (reverting to a different logical state) and dephasing errors (reverting to the same logical state but with an incorrect phase). 

Moreover, the perturbative form possesses a special property when it vanishes:
\begin{corollary}
      As $\frac{1}{d_L}\norm{f(D)\odot \Delta M}_F^2 \to 0$, the near-optimal fidelity
    \begin{align}
         \tilde{F}^{\text{opt}} \to \Tr{D}.
    \end{align}
\end{corollary}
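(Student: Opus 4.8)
The plan is to derive the statement as a continuity property of the exact formula in Theorem~\ref{theo:Fe} at the ``ideal'' point $M=I_L\otimes D$. First I would record the value of that formula there: since $I_L\otimes\sqrt{D}$ is the unique positive square root of $I_L\otimes D$, one has $\operatorname{Tr}_L\sqrt{I_L\otimes D}=d_L\sqrt{D}$, hence $\frac{1}{d_L^2}\norm{\operatorname{Tr}_L\sqrt{I_L\otimes D}}_F^2=\norm{\sqrt{D}}_F^2=\Tr D$. So it suffices to show that $\tilde{F}^{\text{opt}}=\frac{1}{d_L^2}\norm{\operatorname{Tr}_L\sqrt{M}}_F^2$ converges to this value as $\Delta M\to 0$.

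Next I would convert the hypothesis into a statement about $\Delta M$ alone. The diagonal entries of $D$ obey $D_{ll}=\frac{1}{d_L}\sum_\mu\bra{\mu_L}\hat{N}_l^\dagger\hat{N}_l\ket{\mu_L}\in(0,1]$, because $\sum_k\hat{N}_k^\dagger\hat{N}_k\preceq I$ and the codewords are normalized (the value $0$ is excluded once the vanishing channel subspaces are discarded, so that $D\succ0$). Consequently every entry of $f(D)$ is at least $\tfrac{1}{2}$, so $\norm{f(D)\odot\Delta M}_F\ge\tfrac{1}{2}\norm{\Delta M}_F$, and the assumption $\frac{1}{d_L}\norm{f(D)\odot\Delta M}_F^2\to 0$ forces $\norm{\Delta M}_F\to 0$. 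Taking $D$ to be the bare diagonal of $\tfrac{1}{d_L}\operatorname{Tr}_L M$ instead of the unitarily rotated block changes nothing here, since rotation leaves the diagonal untouched.

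The last step is continuity of the matrix square root. Along the family, $M=I_L\otimes D+\Delta M$ is positive semidefinite (it is a Gram matrix), and since $D_{ll}\le 1$ both $M$ and $I_L\otimes D$ lie in the compact set $\{A=A^\dagger:0\preceq A\preceq 2I\}$ once $\norm{\Delta M}_F\le 1$; the map $A\mapsto\sqrt{A}$ is \emph{uniformly} continuous there, so $\norm{\sqrt{M}-\sqrt{I_L\otimes D}}_F\to 0$. Composing with the linear---hence Lipschitz---map $\operatorname{Tr}_L$ and then with $\norm{\cdot}_F^2$ on the bounded region where these matrices live gives $\tilde{F}^{\text{opt}}\to\Tr D$. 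Equivalently, one may read this off directly from Corollary~\ref{coro:Fe_pert} in the gauge $\Tr D=1$: writing $g=\frac{1}{d_L}\norm{f(D)\odot\Delta M}_F^2$, the remainder there is of strictly higher order in $g$ at fixed $d_L$, so $1-\tilde{F}^{\text{opt}}=g+o(g)\to 0$.

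The step needing the most care is that $D$ is not a fixed reference but drifts together with $M$ along the limit. Two facts tame this: the a priori bound $D_{ll}\in(0,1]$, which is what lets one pass from the $f(D)$-weighted hypothesis to $\norm{\Delta M}_F\to 0$ (without it $f(D)$ could degenerate and the implication would fail), and the use of \emph{uniform} rather than pointwise continuity of $\sqrt{\cdot}$, which makes the moving base point $I_L\otimes D$ harmless.
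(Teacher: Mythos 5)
Your proof is correct, and it reaches the conclusion by a route that differs from the paper's in the key continuity step. The skeleton is shared: both you and the paper first use the entrywise lower bound on $f(D)$ (you make it explicit, $f(D)_{[\mu l],[\nu k]}\ge \tfrac12$ because $D_{ll}\le 1$; the paper invokes $D\succ 0$ and $\Tr D\le 1$) to convert the hypothesis into $\norm{\Delta M}_F\to 0$, and both evaluate $\operatorname{Tr}_L\sqrt{I_L\otimes D}=d_L\sqrt{D}$ so that the limiting value is $\Tr D$. Where you diverge is in establishing $\sqrt{M}\to\sqrt{I_L\otimes D}$: the paper reuses the Daleckii--Krein first-order (Fr\'echet-derivative) expansion $\sqrt{M}=\sqrt{I_L\otimes D}+f(D)\odot\Delta M+o(\Delta M)$ together with the norm-compression inequality $\norm{\Tr_L X}_F\le\sqrt{d_L}\,\norm{X}_F$ and a reverse-triangle-inequality step, whereas you invoke uniform continuity of $A\mapsto\sqrt{A}$ on the bounded set $\{0\preceq A\preceq 2I\}$ (e.g.\ via $\norm{\sqrt{A}-\sqrt{B}}\le\sqrt{\norm{A-B}}$ in operator norm or a Powers--St{\o}rmer-type bound) and then Lipschitz continuity of $\Tr_L$ and of $\norm{\cdot}_F^2$ on a bounded region. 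Your choice is more elementary and, as you note, cleanly handles the fact that the base point $I_L\otimes D$ drifts along with $M$---a uniformity issue that the paper's pointwise expansion leaves implicit, since its $o(\Delta M)$ remainder depends on the spectrum of $D$. The paper's route, in exchange, stays within the same Daleckii--Krein machinery used for the perturbative corollary and displays the leading correction explicitly. Your closing remark that the claim can be ``read off'' the perturbative corollary in the $\Tr D=1$ gauge is fine as an aside, but it would inherit the same moving-base-point caveat, so it is good that your main argument does not rely on it.
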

\begin{proof}
    Here, we define $\Delta := \frac{1}{d_L}\abs{\norm{\Tr_L \sqrt{M}}_F - \norm{\Tr_L \sqrt{I_\mu \otimes D}}_F}$, and we show that $\Delta$ vanishes when the perturbative expression vanishes, i.e. $\frac{1}{d_L}\norm{f(D)\odot \Delta M}_F^2 \to 0$. First, notice some matrix identities
    \begin{eqnarray}
        \Delta &\leq& \frac{1}{d_L}\norm{\Tr_L \sqrt{M} - \Tr_L \sqrt{I_L \otimes D}}_F\\
        &\leq& \frac{1}{\sqrt{d_L}}\norm{\sqrt{M} - \sqrt{I_L \otimes D}}_F\\
        &\leq& \norm{\sqrt{\frac{M}{d_L}} - \sqrt{\frac{I_L \otimes D}{d_L}}}_F
    \end{eqnarray}
    The first inequality applies Von Neumann's trace inequality to obtain that for any two Hermitian matrices $A,B\succeq 0$, $\norm{A-B}_F \geq \norm{A}_F - \norm{B}_F$. In the second inequality, we proceeded by decomposing any matrix $X = \sum_i Y_i \otimes \sigma_i$, where $\sigma_i$ are the Gell-mann matrices, and show that $\norm{X}_F \geq \frac{1}{\sqrt{d_L}}\norm{\Tr_L X}_F$, where $d$ is the logical dimension. Notice that from the definition of Fretchet derivatives and the Daleckii-Krein theorem, 
    \begin{eqnarray}
        \sqrt{M} = \sqrt{I_L \otimes D} + f(D)\odot \Delta M + o(\Delta M).
    \end{eqnarray}
    Since $D\succ 0$ and $\Tr{D}\leq 1$, $\norm{f(D)\odot \frac{\Delta M}{\sqrt{d_L}}}_F \to 0$ implies that $\norm{\frac{\Delta M}{\sqrt{d_L}}}_F\to 0$. Thus,
        \begin{eqnarray}
        \lim_{\frac{\Delta M}{\sqrt{d_L}} \to 0} \Delta &\leq& \lim_{\frac{\Delta M}{\sqrt{d}_L} \to 0}\norm{f(\frac{D}{d_L})\odot\frac{\Delta M}{d_L} + o(\frac{\Delta M}{d_L})}_F\\
        &=& \norm{f(D)\odot\frac{\Delta M}{\sqrt{d_L}}}_F\\
        &=& 0
    \end{eqnarray}
    where the last equality holds by assumption. Therefore, in the limit of $\frac{1}{d_L}\norm{f(D)\odot \Delta M}_F^2 \to 0$,
    \begin{align}
     \tilde{F}^{\text{opt}} = \frac{1}{d_L^2}\left\|\operatorname{Tr}_L \sqrt{M}\right\|_F^2 =\frac{1}{d_L^2}\left\|\operatorname{Tr}_L \sqrt{I_L \otimes D}\right\|_F^2 = \Tr D.
\end{align}
\end{proof}

A special case is when $D$ contains all the correctable elements, i.e. $\text{Tr}D = 1$, where both the near-optimal and the optimal fidelity will be unity.

\section{Definitions and notations for examples of applications}

\subsection{Noise models}

Here we define the noise models that appeared in the examples we provided in the main text. The single qubit amplitude damping noise is defined as $\mathcal{N}(\hat{\rho}) = \hat{K}_0 \hat{\rho} \hat{K}_0^\dagger + \hat{K}_1 \hat{\rho} \hat{K}_1^\dagger$, where
\begin{equation}
    \hat{K}_0 = \ket{0}\bra{0} + \sqrt{1-p}\ket{0}\bra{0}, \hat{K}_1 = \sqrt{p}\ket{0}\bra{1},
\end{equation}
with $p$ being the damping probability. The amplitude damping noise is a non-Pauli noise, and it is unclear how to optimally decode it for, for example, stabilizer codes. Therefore, a common choice is to perform Pauli Twirling to convert it into a Pauli channel. A channel is a Pauli channel if it has the form of 
\begin{equation}
    \mathcal{N}(\hat{\rho}) = \left(1 - p_X- p_Y-p_Z\right) \hat{\rho} + p_X \hat{X}\hat{\rho}\hat{X}+ p_Y \hat{Y}\hat{\rho} \hat{Y}+ p_Z \hat{Z}\hat{\rho} \hat{Z}.
\end{equation}

For codes encoded in an oscillator, the excitation loss noise channel, also known as the pure loss channel, has the form of $\mathcal{N}(\hat{\rho}) = \sum_{i=0}^\infty \hat{N}_l \hat{\rho} \hat{N}_l^\dagger$, where
\begin{equation}
    \hat{N}_l = \left(\frac{\gamma}{1-\gamma}\right)^{l/2}\frac{\hat{a}^l}{\sqrt{l!}}\left(1-\gamma\right)^{\hat{n}/2},
\end{equation}
and $\hat{n} = \hat{a}^\dagger \hat{a}$ is the number operator. Here, $\gamma$ is the loss parameter.

\subsection{Code definitions}

For the standard qubit codes, we refer interested readers to the original references~\cite{1996Steane, PhysRevA.56.2567, PhysRevA.56.2567, PhysRevLett.79.953}. We focus on the definitions for the bosonic codes, which largely follow the definitions in Ref.~\cite{PhysRevA.97.032346}.

Cat codes are characterized by the coherent state amplitude, $\alpha$, and the loss spacing, $S$, which is also known as the number of legs. The codewords are defined as 
% \begin{equation}
% \ket{0_L}\propto \hat{\Pi}_0\ket{\alpha}, \ket{1_L}\propto \hat{\Pi}_{S+1}\ket{\alpha},
% \end{equation}
\begin{equation}
    \begin{aligned}
        \ket{0_L} &\propto \hat{\Pi}_0\ket{\alpha} \\
        \ket{1_L}& \propto \hat{\Pi}_{S+1}\ket{\alpha},
    \end{aligned}
\end{equation}
where we omit the normalization factors for simplicity. Here, $\hat{\Pi}_0 = \sum_{n=0}^\infty \vert 2n(S+1)\rangle \langle 2n(S+1)\vert$ and $\hat{\Pi}_{S+1} = \sum_{n=0}^\infty \vert (2n+1)(S+1)\rangle \langle (2n+1)(S+1)\vert$ are projectors onto Fock states that are $0$ and $S+1$ mod $2(S+1)$ respectively. In the special case of $S=0$, they are the even and odd parity space projectors. Notice that since the codewords' support in the Fock basis has a spacing of $S+1$, the error subspace is orthogonal and exactly correctable as long as we consider less than $S+1$ excitation loss. Nevertheless, it is unnecessary to have a complete separation in terms of their Fock state support: two states supported in the same generalized parity space can still be orthogonal. For example, the recently discovered squeezed cat code.~\cite{XuZheng2023} explores such a property. The squeezed cat code has its codewords living in the even and odd parity spaces, i.e. $S=0$ in the convention of cat codes. However, in the limit of infinite squeezing, the code can still correct single excitation loss because of the orthogonality mentioned.

Binomial codes are characterized by the loss spacing, $S$, and the dephasing spacing, $N$. The codewords are defined as 
\begin{equation}
    \begin{aligned}
        \ket{0_L} &= \frac{1}{\sqrt{2^{N+1}}} \sum_{m=0}^{N+1}\sqrt{{N+1 \choose m}}\ket{(S+1)m} \\
        \ket{1_L}&= \frac{1}{\sqrt{2^{N+1}}}  \sum_{m=0}^{N+1}\left(-1\right)^m\sqrt{{N+1 \choose m}}\ket{(S+1)m}.
    \end{aligned}
\end{equation}
It is clear that if we switch the basis to $\ket{\pm_L}$, the binomial code is protected by the Fock state subspace spacing very similar to the cat code. Therefore, it possesses the same property that it can exactly correct less than $S+1$ excitation loss.

The ideal GKP, which has infinite energy is defined on top of a symplective lattice. In the main text, we gave the square lattice as an example. In the position basis, we have that 
\begin{equation}
    \begin{aligned}
        \ket{0_L} &\propto \sum_{n\in\mathbb{Z}}\ket{2n\sqrt{\pi}}_x \\
        \ket{1_L}&\propto \sum_{n\in\mathbb{Z}}\ket{(2n+1)\sqrt{\pi}}_x.
    \end{aligned}
\end{equation}
Since these codewords are not physical, we apply a Gaussian envelope. We define the finite-energy GKP codes as
\begin{equation}
    \ket{\mu^\Delta_L} \propto e^{-\Delta^2 \hat{n}} \ket{\mu_L}
\end{equation}
where we omit the normalization factor. For GKP codes encoding two logical dimensions, the codewords live in the even parity subspace. Therefore, they are exact codes against a single excitation loss, but for more loss, they are approximate codes. For general lattices and logical dimensions, the separation in parity does not necessarily hold.

In the main text, we considered the performance comparison of cat, binomial, and GKP codes with increasing photon number. In particular, for cat and binomial codes, we increase their photon number by fixing the parameter S and increasing the coherent state amplitude, $\alpha$, and the dephasing spacing, $N$, respectively. For the GKP code, we fix the underlying lattice geometry and decrease $\Delta$.

\section{Thermodynamic code and derivations}

\subsection{Code definition}

Thermodynamic codes were first proposed in Ref.~\cite{PhysRevLett.123.110502}, where it was observed that the Eigenstate Thermalization Hypothesis automatically yielded approximate error correction codes in the bulk of the spectrum. We focus here on the simplest thermodynamic code, which can be seen as either an ETH code with respect to the one local Hamiltonian $\sum_i (I - \sigma_z^i)$, or a ground state encoding in the ferromagnetic Heisenberg model. The code words on $N$ modes, for $m = 0, 1, ..., N$ are 
\begin{equation}
    | h_m^N \rangle = \genfrac(){0pt}{0}{N}{N/2 + m/2}^{-1/2} \sum_{ \mathbf{s}: \sum_j s_j = m } | \mathbf{s} \rangle_N, 
\end{equation}
where $s_j = \pm 1$.

We focus on a logical qubit with $k = 2$. To define a logical qubit, we pick some $d < N, m_0 < N - d$ and take \begin{equation}
    \begin{aligned}
        | 0_L \rangle &= | h_{m_0}^N \rangle, \\
        | 1_L \rangle &= | h_{m_0 + d}^N \rangle,
    \end{aligned}
\end{equation}
where $d$ is the distance of the code. In our example given in the main text, we focused on the case where $m_0 = \frac{d}{2}$.

\subsection{One erasure}

We first work out the case of a single erasure. The error channel can be written with Kraus operators 
\begin{equation}
    \begin{aligned}
        K_0 &= \sqrt{1 - p} I, \\
        K_1 &= \sqrt{p} | \Omega \rangle \langle -1 |, \\
        K_2 &= \sqrt{p} | \Omega \rangle \langle 1 |,
    \end{aligned}
\end{equation}
where $p$ is the probability of erasure, and $| \Omega \rangle$ is some external state, orthonormal to $|1\rangle, |-1\rangle$, representing an erased qubit. Since the codewords are permutation invariant, without loss of generality, we can set the error-prone qubit to be the first qubit. Then the above Kraus operators are only supported on the first qubit, with identity on the rest of the system. We want to calculate the relevant QEC matrix.

First, note that if we simply pick $d > 2$ we will have $M_{l, \mu, l' , \nu} \propto \delta_{\mu, \nu} \delta_{l, l'}$ such that 
\begin{equation}
    M = \sum_{l, \mu} c_{\mu, l} | l \rangle \langle l | \otimes | \mu \rangle \langle \mu |.
\end{equation}
Then, for this problem, Eq.~(\ref{supp_eq:TC_main_result}) simplifies to
\begin{equation}
    1 - \tilde{F}^{\text{opt}} = 1 - \frac{1}{4} \sum_l \left( \sum_{\mu} \sqrt{c_{\mu, l}} \right)^2.
\end{equation}

We simply have to calculate $c_{\mu, l}$, which are defined by \begin{equation}
    \langle \mu_L | K_l^{\dag} K_l | \mu_L \rangle.
\end{equation}

For $l = 0$ this is simply $1- p$. Suppose $\mu_L$ has magnetization $m$. Then $K_1$ increases the magnetization by $1$, so that this simply counts the number of states on $N-1$ qubits with magnetization $m+1$. Similarly, $K_2$ decreases the magnetization by $1$. Then we have \begin{equation}
    \begin{aligned}
        &\langle h_m^N | K_1^{\dag} K_1 | h_m^N \rangle = p \frac{{N - 1\choose (N + m)/2}}{{N \choose (N + m)/2}} = \frac{p}{2} (1 - m/N), \qquad 
        &\langle h_m^N | K_2^{\dag} K_2 | h_m^N \rangle = p \frac{{N - 1\choose (N + m-2)/2}}{{N \choose (N + m)/2}} = \frac{p}{2} (1 + m/N)
    \end{aligned}
\end{equation} 
Specializing to the chosen codewords, 
\begin{equation}
    \begin{aligned}
        \sum_l \left( \sum_{\mu} \sqrt{c_{\mu, l}} \right)^2  
    &= \sum_{l = 0}^2 \left( \sqrt{\langle h_{m_0}^N | K_l^{\dag} K_l | h_{m_0}^N \rangle} + \sqrt{\langle h_{m_0+d}^N | K_l^{\dag} K_l | h_{m_0+d}^N \rangle}  \right)^2 \\
    &= 4 (1-p) + \frac{p}{2}\left( \sqrt{ 1 - \frac{m_0}{N}} + \sqrt{1 - \frac{m_0 + d}{N}} \right)^2 + \frac{p}{2}\left( \sqrt{ 1 + \frac{m_0}{N}} + \sqrt{1 + \frac{m_0 + d}{N}} \right)^2 \\ 
    &= 4 (1-p) + 2 p + p \sqrt{ 1 - \frac{m_0}{N}} \sqrt{1 - \frac{m_0 + d}{N}} + p  \sqrt{ 1 + \frac{m_0}{N}} \sqrt{1 + \frac{m_0 + d}{N}} 
    \end{aligned}
\end{equation}

The most straightforward way to pick two states in the middle of the spectrum is to take $m_0 = -d/2$. We can treat $x \equiv d/N$ as a small parameter in the thermodynamic limit. The exact expression is \begin{equation}
    1- \tilde{F}^{\text{opt}} = \frac{1}{2} p(1 - \sqrt{1 - x^2/4}).
\end{equation}

For small $x$, we get \begin{equation}
    1 - \tilde{F}^{\text{opt}} = \frac{1}{16} p x^2 = \frac{1}{16} p \frac{d^2}{N^2}
\end{equation}
which has the expected $1/N^2$ scaling \cite{PhysRevX.10.041018}. At first glance, a strange property of this scaling is that infidelity increases with distance. We can understand this as follows: once $d > 2$, the QEC matrix becomes diagonal. Then, infidelity is given by how close the diagonal elements are to each other. The smaller $d$ is, the closer the codewords are to each other, such that the diagonal elements become more similar. Another way to think about this is that now $d$ controls how far from the exact middle of the spectrum we are.

\subsection{Two erasures}

We now fix two qubits $i, j$ where erasures occur. Again, without loss of generality, we can simply think of these as the first two qubits. If we pick $d > 4$, the QEC matrix will be diagonal along the codewords -- however, it will not be diagonal along the Kraus operators. We can index Kraus operators by $(\alpha, \beta)$, and write 
\begin{equation}
    \begin{aligned}
        M &= \sum_{\mu} \sum_{\alpha, \beta, \alpha', \beta'} c_{\mu, (\alpha, \beta), (\alpha', \beta')} | \alpha, \beta \rangle \langle \alpha', \beta' | \otimes | \mu \rangle \langle \mu | 
    \end{aligned}
\end{equation}
This is certainly diagonal for the case where there is only an error on one qubit, as we know from studying one erasure, so
\begin{equation}
    \begin{aligned}
        M = 
        &\sum_{\mu} c_{\mu, (0, 0), (0, 0)} | 0, 0 \rangle \langle 0, 0 | \otimes | \mu \rangle \langle \mu | 
        \\&+  
        \sum_{\mu} \sum_{\alpha \neq 0} c_{\mu, (\alpha, 0), (\alpha, 0)} \left( | \alpha, 0 \rangle \langle \alpha, 0 | + | 0, \alpha \rangle \langle 0, \alpha |\right)  \otimes | \mu \rangle \langle \mu | 
        \\&+
        \sum_{\mu} \sum_{\alpha \neq 0, \beta \neq 0, \alpha' \neq 0, \beta' \neq 0} c_{\mu, (\alpha, \beta), (\alpha', \beta')} | \alpha, \beta \rangle \langle \alpha', \beta' | \otimes | \mu \rangle \langle \mu |,
    \end{aligned}
\end{equation}
where the first row corresponds to no error, the second row to one error, and the last row to two errors. The one error elements can be read off using the results of the previous section: $c_{\mu, (0, 0), (0, 0)} = (1 - p)^2,
        c_{\mu, (1, 0), (1, 0)} = \frac{1}{2}p(1 - p) ( 1 - m_0/N),
        c_{\mu, (2, 0), (2, 0)} = \frac{1}{2}p(1 - p) ( 1 + m_0/N) $.

Now, let's consider the two error parts of the QEC matrix (i.e. both $\alpha, \beta$ non-zero). First, note that if $\alpha = \beta$, then we must have $\alpha' = \beta' = \alpha = \beta$ for the matrix element to be nonzero. The reason for this is that we are either taking away two excitations, or adding two excitations, so in order to have a non-zero inner product, we must do the same to the other side. With that, we have 
\begin{equation}
    \begin{aligned}
        M_{\rm 2 err}^{(\mu)}
        &= \begin{pmatrix}
            c_{\mu,(1, 1), (1, 1)} & 0 \\
            0 & c_{\mu,(2, 2), (2, 2)}
        \end{pmatrix} \oplus \begin{pmatrix}
            c_{\mu,(1, 2), (1, 2)} & c_{\mu,(1, 2), (2, 1)} \\
             c_{\mu,(2, 1), (1, 2)}  & c_{\mu,(2, 1), (2, 1)} \\
        \end{pmatrix}
    \end{aligned},
\end{equation}
so thankfully, we only have the one $2 \times 2$ matrix to diagonalize. The diagonal part has elements,
\begin{equation}
    \begin{aligned}
        c_{\mu,(1, 1), (1, 1)}
        &= \langle h_m^N | K_1^{\otimes 2 \dag} K_1^{\otimes 2} | h_m^N \rangle 
        = p^2 {N \choose (N + m)/2}^{-1} {N - 2 \choose (N + m) / 2} 
        = p^2 \left( \frac{(N - m)^2}{4N(N-1)} - \frac{N-m}{2N(N-1)}\right),\\
        c_{\mu,(2, 2), (2, 2)} 
        &=\langle h_m^N | K_2^{\otimes 2 \dag} K_2^{\otimes 2} | h_m^N \rangle 
        = p^2 {N \choose (N + m)/2}^{-1} {N - 2 \choose (N + m  - 4) / 2}
        = p^2 \left( \frac{(N + m)^2}{4N(N-1)} - \frac{N+m}{2N(N-1)}\right).
    \end{aligned}
\end{equation}

Now for the non-diagonal part. It is straightforward to see that all the elements are the same, which we will denote \begin{equation}
    M_{12} = \begin{pmatrix}
        c_{\mu,(1, 2), (1, 2)} & c_{\mu,(1, 2), (2, 1)} \\
         c_{\mu,(2, 1), (1, 2)}  & c_{\mu,(2, 1), (2, 1)} \\
    \end{pmatrix} \equiv c_{\mu, \times} \begin{pmatrix}
        1 & 1 \\ 1 & 1
    \end{pmatrix} \rightarrow \sqrt{M_{12}} = \sqrt{\frac{c_{\mu, \times} }{2}}\begin{pmatrix}
        1 & 1 \\ 1 & 1
    \end{pmatrix} 
\end{equation}

Since errors of the form $(1, 2)$ do not change the magnetization, we can calculate $c_{\mu, \times}$ as follows: 
\begin{equation}
    \begin{aligned}
        c_{\mu, \times}
        &= \langle h_m^N | (K_1^{\dag}\otimes K_2^{\dag})(K_1 \otimes K_2) | h_m^N \rangle 
        = p^2 {N \choose (N + m)/2}^{-1} {N - 2 \choose (N + m - 2) / 2}  
        % = p^2 \frac{(N+m)(N+m-2)}{4N(N-1)}
        = p^2 \frac{N^2 - m^2}{4N(N-1)}
    \end{aligned}
\end{equation}

With that we can now calculate the infidelity. Denoting the diagonal and non diagonal parts of $M$ by $M_1, M_2$ so that $M = M_1 \oplus M_2 $ and 
\begin{equation}
    \begin{aligned}
        M_1 = 
        &\sum_{\mu} c_{\mu, (0, 0), (0, 0)} | 0, 0 \rangle \langle 0, 0 | \otimes | \mu \rangle \langle \mu | 
        \\&+  
        \sum_{\mu} \sum_{\alpha \neq 0} c_{\mu, (\alpha, 0), (\alpha, 0)} \left( | \alpha, 0 \rangle \langle \alpha, 0 | + | 0, \alpha \rangle \langle 0, \alpha |\right)  \otimes | \mu \rangle \langle \mu | 
        \\&+
        \sum_{\mu} \sum_{\alpha = \beta \neq 0} c_{\mu, (\alpha, \alpha), (\alpha, \alpha)} | \alpha, \alpha \rangle \langle \alpha, \alpha | \otimes | \mu \rangle \langle \mu |, \\
        M_2 = &\oplus_{\mu} c_{\mu, \times} \begin{pmatrix}
            1 & 1 \\ 1 & 1
        \end{pmatrix},
    \end{aligned}
\end{equation}
one can plug and chug to obtain,
\begin{equation}
    \begin{aligned}
        ({\rm Tr}_{\mu} \sqrt{M})^2 = & ({\rm Tr}_{\mu} \sqrt{M_1})^2 + ({\rm Tr}_{\mu} \sqrt{M_2})^2 \\
        \| ({\rm Tr}_{\mu} \sqrt{M_1}) \|^2 = 
        & \left( \sum_{\mu} \sqrt{c_{\mu, (0, 0), (0, 0)}} \right)^2
        + 2 \sum_{\alpha = 1, 2} \left( \sum_{\mu} \sqrt{c_{\mu, (\alpha, 0), (\alpha, 0)}} \right)^2 + \sum_{\alpha = 1, 2} \left( \sum_{\mu} \sqrt{c_{\mu, (\alpha, \alpha), (\alpha, \alpha)}} \right)^2 \\
        \| {\rm Tr}_{\mu} \sqrt{M_2}\|^2 = &2 \left( \sum_{\mu} \sqrt{c_{\mu,\times}} \right)^2
    \end{aligned}
\end{equation}

Finally, the fidelity will be 
\begin{equation}
    \tilde{F}^{\text{opt}} = (1 - p)^2 + \frac{1}{2} \sum_{\alpha = 1, 2} \left( \sum_{\mu} \sqrt{c_{\mu, (\alpha, 0), (\alpha, 0)}} \right)^2 + \frac{1}{4}  \sum_{\alpha = 1, 2} \left( \sum_{\mu} \sqrt{c_{\mu, (\alpha, \alpha), (\alpha, \alpha)}} \right)^2  + \frac{1}{2} \left( \sum_{\mu} \sqrt{c_{\mu,\times}} \right)^2.
\end{equation}

We can now set $m_0 = - d/2, m_1 = d/2$ as usual, and obtain exact expressions. The full expression however, is long and tedious, so we take the limit $N \gg 1, d/N \ll 1$ in turn, and obtain 
\begin{equation}
    1 - \tilde{F}^{\text{opt}} = \left( \frac{1}{2} p - \frac{3}{8} p^2 \right) \frac{d^2}{N^2} + O \left( \frac{d^2}{N^3} \right)   .
\end{equation}
Note that in the $1$ erasure case, we could have taken $m \sim O(N)$ and still obtained the $d^2/N^2$ scaling. However, here we cannot -- if we take $m_0 = (N - d)/2$ for instance, it turns out there will remain uncorrectable errors even in the thermodynamic limit.

\subsection{Constant number of erasures under sufficiently large distance}

Now we consider the case where there are $l$ erasures for an instance of a thermodynamic code with distance $d > l$. Again, WLOG we place these errors on the first $l$ qubits. The matrix elements we have to calculate all have the form \begin{equation}\label{supp_eq:thermo_approx_form_constant_error}
\begin{aligned}
    c_{\mu}^{(l_1)}
    &= p^l (1 - p)^{N-l} \langle h_{m_{\mu}}^N |\left( K_1^{\otimes l_1} \otimes K_2^{\otimes l - l_1} \otimes K_0^{\otimes N - l} \right)^{\dag} \left( K_1^{\otimes l_1} \otimes K_2^{\otimes l - l_1} \otimes K_0^{\otimes N - l} \right) | h_{m_{\mu}}^N \rangle \\ 
    &= p^l (1 - p)^{N-l} {N \choose (N+m_{\mu})/2}^{-1} {N - l \choose ((N + m_{\mu})/2 - (l - l_1)},
\end{aligned}
\end{equation}
where the second factor counts the number of states on $N-l$ qubits with magnetization $m_{\mu} - l + 2 l_1$. Here, $m_{\mu}$ is the magnetization of the codeword $|\mu_L\rangle$. We will set $p = 1$, since we are interested in what happens if these $l$ qubits were deterministically erased. We will assume that $m_{\mu} \ll N$, in which case we can expand the above in powers of $m_{\mu}/N$ as follows:
\begin{equation}
\begin{aligned}
    c_{\mu}^{(l_1)}
    &\simeq \frac{1}{N^l} \left( \frac{N + m_{\mu}}{2} \right)^{l - l_1} \left( \frac{N - m_{\mu}}{2} \right)^{l_1} \\
    &= \frac{1}{2^{l}} \left(1 + \frac{m_{\mu}}{N} \right)^{l - l_1} \left(1 - \frac{m_{\mu}}{N} \right)^{ l_1} \\
    &\simeq \frac{1}{2^{l}} \left(1 + (l - 2l_1) \frac{m_{\mu}}{N} +  \frac{1}{2}(l^2 + 4 l_1^2 - 4 l l_1 - l) \frac{m_{\mu}^2}{N^2}\right) + O\left(\frac{1}{N^3} \right).
\end{aligned}
\end{equation}
where we have obtained the first approximate equality by only keeping track of terms with $N, m_{\mu}$.

Now, any permutation of $l_1$ $K_1$'s and $l - l_1$ $K_2$'s on the first $l$ qubits will give the same overlap (and any other configuration of Kraus operators will give $0$ overlap), so the above gives a block of the QEC matrix of size ${l \choose l_1}$ by ${l \choose l_1}$, with every matrix element being the same number above. Let us denote a matrix of ones of size $d \times d$ by $O_1^{d}$. The full QEC matrix then takes the form \begin{equation}
    M = \bigoplus_{\mu = 0, 1} \bigoplus_{l_1 = 0}^l c_{\mu}^{(l_1)} O_1^{{l \choose l_1}},
\end{equation}
such that \begin{equation}
\begin{aligned}
    \widetilde{F}^{\rm opt} 
    = \frac{1}{4} \| {\rm Tr}_{\mu} \sqrt{M} \|^2_F  
    &= \frac{1}{4} \sum_{l_1 = 0}^l {\rm Tr} \left( \sum_{\mu} \sqrt{c_{\mu}^{(l_1)} O_1^{{l \choose l_1}}}\right)^2 \\
    &= \frac{1}{4} \sum_{l_1 = 0}^l {l \choose l_1} \left( \sum_{\mu} \sqrt{c_{\mu}^{(l_1)}}\right)^2.
\end{aligned}
\end{equation}
The exact expression can now be easily evaluated numerically, or we can continue with our expansion in terms of $d/N$ as before. Setting $m_{\mu} = \pm d/2, x = d/N$, and using the approximate form in Eq.~\eqref{supp_eq:thermo_approx_form_constant_error}, we have
\begin{equation}
    \left( \sum_{\mu} \sqrt{c_{\mu}^{(l_1)}}\right)^2 \simeq \frac{4}{2^{l}} \left( 1 + \frac{1}{16}\left( -2 l + l^2 - 4 l l_1 + 4 l_1^2 \right) x^2 \right)
\end{equation}
To evaluate the prefactor of $x^2$, we note that $\sum_{l_1 = 0}^l {l \choose l_1} = 2^l, \sum_{l_1 = 0}^l {l \choose l_1} l_1 =  2^{l-1} l, \sum_{l_1 = 0}^l {l \choose l_1} l_1^2 = 2^{l-2}l(l-1)$. This leads us to the simple final expression, 
\begin{equation}\label{supp_eq:l_erasure_infid}
\begin{aligned}
    1 - \widetilde{F}^{\rm opt} 
    = \frac{l}{16} x^2 + O\left( \frac{1}{N^3} \right) = \frac{l}{16} \frac{d^2}{N^2} + O\left( \frac{1}{N^3} \right).
\end{aligned}
\end{equation}
Note that the above equation is only valid when we keep $l$ constant and take $N$ to be large, but will not hold if we allow $l$ to scale with $N$. Eq.~\eqref{supp_eq:l_erasure_infid} matches with the single and two erasure error expressions by setting $l=1,2$ and unit erasure probability $p=1$.

\end{document}